\theoremstyle{plain}
\numberwithin{equation}{section}
\newtheorem{thm}{Theorem}[section]
\newtheorem{lem}[thm]{Lemma}
\newtheorem{cor}[thm]{Corollary}
\theoremstyle{definition}  
\newtheorem{exam}{Example}  
\newcommand{\tbullet}{\mathrel{\raise .4ex\hbox{\tiny$\bullet$}}} 
\newcommand{\rmtr}{\mathrm{tr\,}}
\newcommand{\rmin}{\mathrm{In\,}}
\newcommand{\rmpost}{\mathrm{Post\,}}
\newcommand{\ascript}{\mathcal{A}}
\newcommand{\cscript}{\mathcal{C}}
\newcommand{\escript}{\mathcal{E}}
\newcommand{\iscript}{\mathcal{I}}
\newcommand{\jscript}{\mathcal{J}}
\newcommand{\lscript}{\mathcal{L}}
\newcommand{\oscript}{\mathcal{O}}
\newcommand{\pscript}{\mathcal{P}}
\newcommand{\sscript}{\mathcal{S}}
\newcommand{\iscripthat}{\widehat{\iscript}}
\newcommand{\jscripthat}{\widehat{\jscript}}
\newcommand{\brac}[1]{\left\{#1\right\}}
\newcommand{\paren}[1]{\left(#1\right)}
\newcommand{\sqbrac}[1]{\left[#1\right]}
\newcommand{\parensq}[1]{{\left(#1\right]}} 
\newcommand{\elbows}[1]{{\left\langle#1\right\rangle}}
\begin{document}

\title{COMBINATIONS OF\\QUANTUM OBSERVABLES\\AND INSTRUMENTS}
\author{Stan Gudder\\ Department of Mathematics\\
University of Denver\\ Denver, Colorado 80208\\
sgudder@du.edu}
\date{}
\maketitle

\begin{quote}
\hskip 1pc\textit{``You should conduct research of such a high quality that \newline people remember your name.''}
{\hfill ---{\small Author, Unknown}}
\end{quote}
\vskip 2pc

\begin{abstract}
This article points out that observables and instruments can be combined in many ways that have natural and physical interpretations. We shall mainly concentrate on the mathematical properties of these combinations. Section~1 reviews the basic definitions and observables are considered in Section~2. We study parts of observables, post-processing, generalized convex combinations, sequential products and tensor products. These combinations are extended to instruments in Section~3. We consider properties of observables measured by combinations of instruments. We introduce four special types of instruments, namely Kraus, L\"uders, trivial and semitrivial instruments. We study when these types are closed under various combinations. In this work, we only consider finite-dimensional quantum systems. A few of the results presented here have appeared in the author's previous articles.
\cite{gud220,gud320,gud420}.
\end{abstract}

\section{Basic Definitions}  
Let $\lscript (H)$ be the set of linear operators on a finite-dimensional complex Hilbert space $H$. For $S,T\in\lscript (H)$ we write $S\le T$ if
$\elbows{\phi ,S\phi}\le\elbows{\phi ,T\phi}$ for all $\phi\in H$. We define the set of \textit{effects} by
\begin{equation*} 
\escript (H)=\brac{a\in\lscript (H)\colon 0\le a\le I}
\end{equation*}
where $0,I$ are the zero and identity operators, respectively. The effects correspond to yes-no experiments and $a\in\escript (H)$ is said to \textit{occur} when a measurement of $a$ results in the outcome yes. We call $\rho\in\escript (H)$ a \textit{partial state} if $\rmtr (\rho )\le 1$ and
$\rho$ is a \textit{state} if $\rmtr (\rho )=1$. We denote the set of partial states by $\sscript _p(H)$ and the set of states by $\sscript (H)$. If
$\rho\in\sscript (H)$, $a\in\escript (H)$, we call $\pscript _\rho (a)=\rmtr (\rho a)$ the \textit{probability that} $a$ \textit{occurs} in the state $\rho$ \cite{bgl95,hz12,kra83,lah03}.

We denote the unique positive square-root of $a\in\escript (H)$ by $a^{1/2}$. For, $a,b\in\escript (H)$, their \textit{sequential product} is the effect $a\circ b=a^{1/2}ba^{1/2}$, where $a^{1/2}ba^{1/2}$ is the usual operator product \cite{gg02,gn01}. We interpret $a\circ b$ as the effect that results from first measuring $a$ and then measuring $b$. Let $\Omega _A$ be a finite set. A \textit{finite observable} \cite{hz12,nc00} with
\textit{outcome space} $\Omega _A$ is a subset
\begin{equation*} 
A=\brac{A_x\colon x\in\Omega _A}\subseteq\escript (H)
\end{equation*}
such that $\sum\limits _{x\in\Omega _A}A_x=I$. We denote the set of finite observables on $H$ by $\oscript (H)$. In the sequel, an observable will always mean a finite-observable. We interpret $A\in\oscript (H)$ as a measurement with possible outcomes $x\in\Omega _A$ and $A_x$ is the effect that occurs when the measurement result is $x$. If $A\in\oscript (H)$, we define the effect-values measure $X\mapsto A_X$ from
$2^{\Omega _A}$ to $\escript (H)$ by $A_X=\sum\limits _{x\in X}A_x$. The \textit{distribution of} $A\in\oscript (H)$ \textit{in the state} $\rho\in\sscript (H)$ is defined by $\Phi _\rho ^A(x)=\rmtr (\rho A_x)$ for all $x\in\Omega _A$. Then
\begin{equation*} 
\Phi _\rho ^A(X)=\sum _{x\in X}\Phi _\rho ^A(x)
\end{equation*}
gives the probability that $A$ has an outcome in $X\subseteq\Omega _A$ when the system is in the state $\rho$. Notice that
$X\mapsto\Phi _\rho ^A(X)$ is a probability measure on $\Omega _A$.

An \textit{operation} on $H$ is a completely positive, trace-reducing, linear map $\ascript\colon\lscript (H)\to\lscript (H)$
\cite{bgl95,hz12,kra83,nc00}. Trace-reducing implies that $\ascript\colon\sscript _p(H)\to\sscript _p(H)$. According to Kraus' Theorem
\cite{hz12,kra83,nc00} every operation $\ascript$ has the form $\ascript (T)=\sum\limits _{i=1}^nS_iTS_i^*$ where $S_i\in\lscript (H)$ satisfy
$\sum\limits _{i=1}^nS_i^*S_i\le I$. An operation $\ascript$ is a \textit{channel} if $\ascript (\rho )\in\sscript (H)$ for all $\rho\in\sscript (H)$
\cite{hz12,kra83}. In this case, the \textit{Kraus operators} $S_i$ satisfy $\sum\limits _{i=1}^nS_i^*S_i=I$. We denote the set of channels on $H$ by $\cscript (H)$. For a finite set $\Omega _\iscript$, a \textit{finite instrument} with \textit{outcome space} $\Omega _\iscript$ is a set of operations $\iscript =\brac{\iscript _x\colon x\in\Omega _\iscript}$ such that $\sum\limits _{x\in\Omega _\iscript}\iscript _x\in\cscript (H)$
\cite{bgl95,gud220,gud320,hz12,lah03}. Defining $\iscript _X$ for $X\subseteq\Omega _\iscript$ by $\iscript _X=\sum\limits _{x\in X}\iscript _x$, we see that $X\mapsto\iscript _X$ is an operation -valued measure on $H$. We denote the set of finite instruments on $H$ by $\rmin (H)$. The \textit{distribution} of $\iscript\in\rmin (H)$ \textit{in the state} $\rho\in\sscript (H)$ is defined by
$\Phi _\rho ^\iscript (x)=\rmtr\sqbrac{\iscript _x(\rho )}$ for all $x\in\Omega _\iscript$. Then
\begin{equation*} 
\Phi _\rho ^\iscript (X)=\sum _{x\in X}\Phi _\rho ^\iscript (x)
\end{equation*}
gives the probability that $\iscript$ has an outcome in $X$ when the system is in the state $\rho$. As with observables,
$X\mapsto\Phi _\rho ^\iscript (X)$ gives a probability measure on $\Omega _\iscript$. If $A\in\oscript (H)$, we say that an instrument
$\iscript\in\rmin (H)$ \textit{measures} $A$ (or is \textit{compatible with} $A$) if $\Omega _\iscript =\Omega _A$ and
$\Phi_\rho ^\iscript (x)=\Phi _\rho ^A(x)$ for all $x\in\Omega _A$, $\rho\in\sscript (H)$ \cite{bgl95,hz12,lah03}. This condition is equivalent to
$\rmtr (\rho A_X)=\rmtr\sqbrac{\iscript _X(\rho )}$ for all $X\subseteq\Omega _A$, $\rho\in\sscript (H)$.

If $\iscript\in\rmin (H)$, there exists a unique $\iscripthat\in\oscript (H)$ such that $\iscript$ measures $\iscripthat$ \cite{hz12}. However, an observable has many instruments that measure it. We view $\iscript\in\rmin (H)$ as an apparatus that can be employed to measure the observable $\iscripthat\in\oscript (H)$. However, $\iscript$ gives more information than $\iscripthat$ because $\iscript _x(\rho )\in\sscript _p(H)$ updates the state $\rho$ when the outcome $x$ is observed. There is no corresponding unambiguous updating for observables.

\section{Observables}  
This section discusses functions of observables and various combinations of observables. If $A\in\oscript (H)$ and
$f\colon\Omega _A\to\Omega$ is a surjection, we define $f(A)\in\oscript (H)$ to have outcome space $\Omega$ and for every $y\in\Omega$
\begin{equation*} 
f(A)_y=A_{f^{-1}(y)}=\sum _x\brac{A_x\colon f(x)=y}
\end{equation*}
We say that the observable $f(A)$ is \textit{part} of the observable $A$ \cite{fhl18,gud420,hrsz09,hmr14}. As its name suggests, we think of
$f(A)$ as an observable that measures only a part of $A$. Two observables $A,B\in\oscript (H)$ are said to \textit{coexist} if there exists a
$C\in\oscript (H)$ and surjections $f\colon\Omega _C\to\Omega _A$, $g\colon\Omega _C\to\Omega _B$ such that $A=f(C)$, $B=g(C)$
\cite{bgl95,hz12,lah03}. In this way $A$ and $B$ can be simultaneously measured by measuring a single observable $C$. We say that
$A,B\in\oscript (H)$ are \textit{jointly measurable} if for all $\rho\in\sscript (H)$ there exist probability measures $\mu _\rho$ on
$\Omega _A\times\Omega _B$ such that $\mu _\rho\paren{\brac{x}\times\Omega _B}=\Phi _\rho ^A(x)$ and
$\mu _\rho\paren{\Omega _A\times\brac{y}}=\Phi _\rho ^B(y)$, for all $x\in\Omega _A$, $y\in\Omega _B$. We call $\mu _\rho$ the
\textit{joint distribution} of $A,B$ in the state $\rho$.

\begin{lem}    
\label{lem21}
{\rm{(i)}}\enspace For every $A\in\oscript (H)$, $\rho\in\sscript (H)$, $y\in f(\Omega _A)$ we have that
\begin{equation*} 
\Phi _\rho ^{f(A)}(y)=\Phi _\rho ^A\sqbrac{f^{-1}(y)}=\sum _x\brac{\Phi _\rho ^A(x)\colon f(x)=y}
\end{equation*}
{\rm{(ii)}}\enspace If $A$ and $B$ coexist, then $A$ and $B$ are jointly measurable.
\end{lem}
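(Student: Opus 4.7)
For part (i), the plan is a one-line unwinding of definitions. By the definition of $f(A)$, the effect attached to outcome $y \in f(\Omega_A)$ is $f(A)_y = A_{f^{-1}(y)} = \sum\{A_x : f(x) = y\}$. Applying the map $a \mapsto \rmtr(\rho a)$, which is linear, yields
\begin{equation*}
\Phi_\rho^{f(A)}(y) = \rmtr\bigl(\rho f(A)_y\bigr) = \sum\brac{\rmtr(\rho A_x) : f(x) = y} = \sum\brac{\Phi_\rho^A(x) : f(x) = y},
\end{equation*}
and by the very definition of $\Phi_\rho^A$ on subsets of $\Omega_A$, this common value is $\Phi_\rho^A[f^{-1}(y)]$. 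No obstacle here; it is a direct consequence of linearity of the trace together with the definition of $f(A)$.

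For part (ii), the idea is to use the common refining observable $C$ to manufacture a joint distribution on $\Omega_A \times \Omega_B$. Given surjections $f\colon\Omega_C\to\Omega_A$, $g\colon\Omega_C\to\Omega_B$ with $A = f(C)$ and $B = g(C)$, I will set
\begin{equation*}
\mu_\rho(\brac{x}\times\brac{y}) = \sum\brac{\Phi_\rho^C(z) : f(z)=x,\ g(z)=y},
\end{equation*}
and extend additively to all subsets of $\Omega_A\times\Omega_B$. Since $\Phi_\rho^C$ is a probability measure on $\Omega_C$ and the sets $\{z: f(z)=x,\ g(z)=y\}$ partition $\Omega_C$ as $(x,y)$ ranges over $\Omega_A\times\Omega_B$, $\mu_\rho$ is automatically a probability measure.

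It remains to verify the marginal conditions. Summing over $y\in\Omega_B$ gives
\begin{equation*}
\mu_\rho(\brac{x}\times\Omega_B) = \sum\brac{\Phi_\rho^C(z) : f(z) = x} = \Phi_\rho^C[f^{-1}(x)],
\end{equation*}
and by part (i) applied to $A = f(C)$ this equals $\Phi_\rho^{f(C)}(x) = \Phi_\rho^A(x)$. The argument for the $B$-marginal is identical, using $B = g(C)$. Thus $\mu_\rho$ is a joint distribution for $A$ and $B$ in state $\rho$, which is exactly the definition of joint measurability. The only nontrivial ingredient is part (i), which is why I prove the two statements in sequence; beyond that there is no real obstacle.
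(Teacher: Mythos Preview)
Your proof is correct and follows essentially the same approach as the paper: part (i) is a direct unwinding of the definitions via linearity of the trace, and for part (ii) you construct the joint distribution from the refining observable $C$ exactly as the paper does (your sum $\sum\{\Phi_\rho^C(z): f(z)=x,\ g(z)=y\}$ is precisely the paper's $\rmtr[\rho C_{f^{-1}(x)\cap g^{-1}(y)}]$). The only cosmetic difference is that you invoke part (i) to check the marginals and explicitly note the partition argument for why $\mu_\rho$ is a probability measure, whereas the paper verifies the marginals by a direct trace computation.
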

\begin{proof}
(i)\enspace For every $y\in\Omega _{f(A)}$ we obtain
\begin{align*}
\Phi _\rho ^{f(A)}&=\rmtr\sqbrac{\rho f(A)_y}=\rmtr\sqbrac{\rho A_{f^{-1}(y)}}=\Phi _\rho ^A\sqbrac{f^{-1}(y)}\\
  &=\sum _x\brac{\Phi _\rho ^A(x)\colon f(x)=y}\end{align*}
(ii)\enspace Since $A$ and $B$ coexist, there exists a $C\in\oscript (H)$ such that $A=f(C)$, $B=g(C)$. For $\rho\in\sscript (H)$, define the probability measure $\mu _\rho$ on $\Omega _A\times\Omega _B$ by
\begin{equation*} 
\mu _\rho (x,y)=\rmtr\sqbrac{\rho C_{f^{-1}(x)\cap g^{-1}(y)}}
\end{equation*}
We then obtain
\begin{equation*} 
\mu _\rho\paren{\brac{x}\times\Omega _B}=\rmtr\sqbrac{\rho C_{f^{-1}(x)}}=\rmtr\sqbrac{\rho f(C)_x}=\rmtr (\rho A_x)=\Phi _\rho ^A(x)
\end{equation*}
and in a similar way, $\mu _\rho\paren{\Omega _A\times\brac{y}}=\Phi _\rho ^B(y)$.
\end{proof}

We do not know whether the converse of Lemma~\ref{lem21}(ii) holds.

Let $\Omega _A$ be the outcome space for $A\in\oscript (H)$ and let $\Omega$ be another finite set. Suppose
$\mu\colon\Omega _A\times\Omega\to\sqbrac{0,1}$ satisfies $\sum\limits _{y\in\Omega}\mu _{xy}=1$ for every $x\in\Omega _A$. We call $\mu$ a \textit{transition probability} from $\Omega _A$ to $\Omega$. The condition $\sum _{y\in\Omega}\mu _{xy}=1$ says that $x$ transitions into some $y\in\Omega$ with probability ~1. A \textit{post-processing} of $A$ is an observable $B=\mu\tbullet A\in\oscript (H)$ with outcome space $\Omega$ defined by $B_y=\sum\limits _{x\in\Omega _A}\mu _{xy}A_x$ \cite{fhl18,gud320,gud420}. Notice that $B$ is indeed an observable because $B_y\ge 0$ for all $y\in\Omega$ and
\begin{equation*} 
\sum _{y\in\Omega}B_y=\sum _{x\in\Omega _A}\sum _{y\in\Omega}\mu _{xy}A_x=\sum _{x\in\Omega _A}A_x=I
\end{equation*}
We interpret $B=\mu\tbullet A$ as first measuring $A$ and then processing the result with transitions to the outcome space
$\Omega _B=\Omega$. One way of post-processing $A$ is by employing another observable $B$ and a collection of states $\alpha _x$,
$x\in\Omega _A$. We then define
\begin{equation*} 
\mu _{xy}=\rmtr (\alpha _xB_y)=\Phi _{\alpha _x}^B(y)
\end{equation*}
and write $\mu\tbullet A=\rmpost _{(\alpha ,B)}(A)$. We call $\rmpost _{(\alpha ,B)}(A)$ the \textit{post-processing} of $A$ relative to
$(\alpha _x,B)$. We can also post-process a probability measure $\nu$ on $\Omega _A$ to a probability measure $\mu\tbullet\nu$ on $\Omega$ by defining $(\mu\tbullet\nu )_y=\sum\limits _{x\in\Omega _A}\mu _{xy}\nu _x$.

\begin{lem}    
\label{lem22}
{\rm{(i)}}\enspace $\Phi _\rho ^{\mu\tbullet A}=\mu\tbullet \Phi _\rho ^A$.
{\rm{(ii)}}\enspace $\Phi _\rho ^{\rmpost _{(\alpha ,B)(A)}}(y)=\sum\limits _{x\in\Omega _A}\Phi _{\alpha _x}^B(y)\Phi _\rho ^A(x)$.
\end{lem}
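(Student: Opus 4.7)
The plan is to show that both parts follow by straightforward unpacking of the definitions, with part (ii) being essentially an immediate specialization of part (i).

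For part (i), I would start from the left-hand side and use the definition of the distribution:
\begin{equation*}
\Phi_\rho^{\mu\tbullet A}(y) = \rmtr\sqbrac{\rho (\mu\tbullet A)_y}.
\end{equation*}
Substituting the definition $(\mu\tbullet A)_y = \sum_{x\in\Omega_A}\mu_{xy}A_x$ of the post-processed observable and using linearity of the trace gives $\sum_{x\in\Omega_A}\mu_{xy}\rmtr(\rho A_x) = \sum_{x\in\Omega_A}\mu_{xy}\Phi_\rho^A(x)$. This last expression is precisely the value at $y$ of the post-processed probability measure $\mu\tbullet\Phi_\rho^A$ defined just above the lemma, so equality holds pointwise on $\Omega$.

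For part (ii), I would simply apply part (i) to the special transition probability $\mu_{xy} = \rmtr(\alpha_x B_y) = \Phi_{\alpha_x}^B(y)$ used to define $\rmpost_{(\alpha,B)}(A)$. This immediately yields
\begin{equation*}
\Phi_\rho^{\rmpost_{(\alpha,B)}(A)}(y) = \sum_{x\in\Omega_A}\mu_{xy}\Phi_\rho^A(x) = \sum_{x\in\Omega_A}\Phi_{\alpha_x}^B(y)\Phi_\rho^A(x),
\end{equation*}
as required.

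There is no real obstacle here; the statement is essentially a bookkeeping check that the probabilistic post-processing of the distribution matches the distribution of the operator-level post-processing. The only thing to be careful about is invoking linearity of the trace at the right step, and confirming that $\mu_{xy} = \Phi_{\alpha_x}^B(y)$ is indeed a transition probability in the sense required (which follows from $\sum_y B_y = I$ and $\rmtr(\alpha_x) = 1$), so that part (i) applies.
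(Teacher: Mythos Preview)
Your proposal is correct and follows essentially the same approach as the paper: the paper also unpacks $\Phi_\rho^{\mu\tbullet A}(y)=\rmtr[\rho(\mu\tbullet A)_y]$, pushes the sum through by linearity of the trace to reach $\mu\tbullet\Phi_\rho^A(y)$, and then obtains (ii) from (i) by specializing to $\mu_{xy}=\Phi_{\alpha_x}^B(y)$. Your extra remark verifying that this $\mu$ is a genuine transition probability is a harmless (and welcome) addition.
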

\begin{proof}
(i)\enspace For all $y\in\Omega$ we have that
\begin{align*}
\Phi _\rho ^{\mu\tbullet A}(y)&=\rmtr\sqbrac{\rho (\mu\tbullet A)_y}=\rmtr\sqbrac{\rho\sum _{x\in\Omega _A}\mu _{xy}A_x}
   =\sum _{x\in\Omega _A}\mu _{xy}\rmtr (\rho A_x)\\
   &=\sum _{x\in\Omega _A}\mu _{xy}\Phi _\rho ^A(x)=\mu\tbullet\Phi _\rho ^A(y)
\end{align*}
The result follows. (ii)\enspace Applying (i) gives
\begin{equation*}
\Phi _\rho ^{\rmpost _{(\alpha ,B)}(A)}(y)=\sum _{x\in\Omega _A}\mu _{xy}\Phi _\rho ^A(x)
   =\sum _{x\in\Omega _A}\Phi _{\alpha _x}^B(y)\Phi _\rho ^A(x)\qedhere
\end{equation*}
\end{proof}

Let $A^i\in\oscript (H)$ with outcome spaces $\Omega ^i$, $i=1,2,\ldots ,n$ and let $\lambda _i\in (0,1)$ with $\sum\lambda _i=1$. A
\textit{generalized convex combination} of $A^i$ has outcome space $\Omega =\bigcup\limits _{i=1}^n\Omega ^i$ and is the observable define by
\begin{equation*}
\paren{\bigvee _{i=1}^n\lambda _iA^i}_x=\sum _i\brac{\lambda _iA_x^i\colon x\in\Omega ^i}
\end{equation*}
for all $x\in\Omega$. The two extreme cases of a generalized convex combination are when $\Omega _i=\Omega$, $i=1,2,\ldots ,n$ and when
$\Omega ^i\cap\Omega ^j=\emptyset$, $i\ne j$, $i,j=1,2,\ldots ,n$. The first case is called a \textit{convex combination} and is denoted by
$\sum\limits _{i=1}^n\lambda _iA^i$. The second case is called a \textit{convex union} and is denoted by $\bigcup\limits _{i=1}^n\lambda _iA^i$. We have that $\paren{\sum\limits _{i=1}^n\lambda _iA^i}_x=\sum\limits _{i=1}^n\lambda _iA_x^i$ for all $x\in\Omega$ and
$\paren{\bigcup _{i=1}^n\lambda _iA^i}_x=\lambda _jA_x^j$ where $x\in\Omega ^j$. When $A=\bigvee _{i=1}^n\lambda _iA^i$ we obtain
\begin{equation*}
\Phi _\rho ^A(x)=\sum _{i=1}^n\brac{\lambda _i\rmtr (\rho A_x^i)\colon x\in\Omega ^i}
   =\sum _{i=1}^n\brac{\lambda _i\Phi _\rho ^{A^i}(x)\colon x\in\Omega ^i}
\end{equation*}

\begin{exam}  
Let $\brac{a_1,a_2,a_3},\brac{b_1,b_2,b_3}\in\oscript (H)$. Define $A^1,A^2\in\oscript (H)$ by $\Omega _{A^i}=\brac{x_1,x_2,x_3}$,
$i=1,2$, $A_{x_j}^1=a_j$, $A_{x_j}^2=b_j$, $j=1,2,3$. Then for the convex combination $A=\tfrac{1}{2}\,A_1+\tfrac{1}{2}\,A^2$ we have that
$\Omega _A=\brac{x_1,x_2,x_3}$ and $A_{x_i}=\tfrac{1}{2}\,(a_i+b_i)$, $i=1,2,3$. Now define $B^1,B^2\in\oscript (H)$ by
$\Omega _{B^1}=\brac{x_1,x_2,x_3}$, $\Omega _{B^2}=\brac{y_1,y_2,y_3}$ where $\Omega _{B^1}\cap\Omega _{B^2}=\emptyset$ and
$B_{x_i}^1=a_i$, $B_{y_i}^2=b_i$, $i=1,2,3$. Then for the convex union $B=\tfrac{1}{2}\,B^1\cup\tfrac{1}{2}\,B^2$ we have
\begin{equation*}
\Omega _B=\Omega _{B^1}\cup\Omega _{B^2}=\brac{x_1,x_2,x_3,y_1,y_2,y_3}
\end{equation*}
and $B_{x_i}=\tfrac{1}{2}\,a_i$, $i=1,2,3$, $B_{y_i}=\tfrac{1}{2}\,b_i$, $i=1,2,3$. For another example, define $C^1,C^2\in\oscript (H)$ by
$\Omega _{C^1}=\brac{x_1,x_2,x_3}$, $\Omega _{C^2}=\brac{x_1,y_2,y_3}$ where $\brac{x_2,x_3}\cap\brac{y_2,y_3}=\emptyset$ and
$C_{x_i}^1=a_i$, $i=1,2,3$, $C_{x_1}^2=b_1$, $C_{y_i}^2=b_i$, $i=2,3$. Then for the generalized convex combination
$C=\tfrac{1}{2}\,C^1\vee\tfrac{1}{2}\,C^2$ we have
\begin{equation*}
\Omega _C=\Omega _{C^1}\cup\Omega _{C^2}=\brac{x_1,x_2,x_3,y_2,y_3}
\end{equation*}
and $C_{x_1}=\tfrac{1}{2}(a_1+b_2)$, $C_{x_2}=\tfrac{1}{2}\,a_2$, $C_{x_2}=\tfrac{1}{2}\,a_2$, $C_{y_2}=\tfrac{1}{2}\,b_2$,
$C_{y_3}=\tfrac{1}{2}\,b_3$. To illustrate the large number of possibilities even in this simple case, define $D^1,D^2\in\oscript (H)$ by
$\Omega _{D^1}=\brac{x_1,x_2,x_3}$, $\Omega _{D^2}=\brac{x_1,x_2,y_3}$ where $x_3\ne y_3$ and $D_{x_i}^1=a_i$, $i=1,2,3$,
$D_{x_1}^2=b_1$, $D_{x_2}^2=b_2$, $D_{y_3}^2=b_3$. For the generalized convex combination $D=\tfrac{1}{2}\,D^1\vee\tfrac{1}{2}\,D^2$ we have $\Omega _D=\brac{x_1,x_2,x_3,y_3}$ and $D_{x_1}=\tfrac{1}{2}(a_1+b_1)$, $D_{x^2}=\tfrac{1}{2}(a_2+b_2)$,$D_{x_3}=\tfrac{1}{2}\,a_3$, $D_{y_3}=\tfrac{1}{2}\,b_3$.\hfill\qedsymbol
\end{exam}

\begin{thm}    
\label{thm23}
{\rm{(i)}}\enspace $f\paren{\bigvee\limits _{i=1}^n\lambda _iA^i}_y=\sum\limits _{i,x}\brac{\lambda _iA_x^i\colon x\in\Omega _i,f(x)=y}$\newline
{\rm{(ii)}}\enspace $f\paren{\sum\limits _{i=1}^n\lambda _iA^i}=\sum\limits _{i=1}^n\lambda _if(A^i)$.
{\rm{(iii)}}\enspace $f\paren{\bigcup\limits _{i=1}^n\lambda _iA^i}=\sum _{i=1}^n\lambda _if|_{\Omega _i}(A^i)$.
\end{thm}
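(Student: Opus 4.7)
The plan is to establish (i) directly from the two definitions involved, and then obtain (ii) and (iii) as immediate specializations to the extreme cases in which the outcome spaces all coincide or are pairwise disjoint.

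For (i), I would unfold the left-hand side by first applying the defining formula $f(A)_y = \sum_x\brac{A_x\colon f(x)=y}$ with $A = \bigvee_{i=1}^n \lambda_i A^i$, and then substituting the definition of the generalized convex combination, $A_x = \sum_i\brac{\lambda_i A_x^i\colon x \in \Omega^i}$. Since both summations are finite, interchanging them produces the double-indexed expression
\begin{equation*}
\sum_{i,x}\brac{\lambda_i A_x^i\colon x \in \Omega^i,\ f(x) = y},
\end{equation*}
which is the right-hand side of (i).

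For (ii), I would observe that a convex combination is the case $\Omega^i = \Omega$ for every $i$, so the side condition $x \in \Omega^i$ in (i) is vacuous. Pulling the sum over $i$ outside then gives $\sum_i \lambda_i \sum_x \brac{A_x^i\colon f(x) = y} = \sum_i \lambda_i f(A^i)_y$, which is exactly the claim.

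For (iii), I would use that a convex union has pairwise disjoint $\Omega^i$, so the double sum in (i) factors as an outer sum over $i$ of inner sums over $x \in \Omega^i$. The inner sum $\sum_x\brac{A_x^i\colon x \in \Omega^i,\ f(x) = y}$ is precisely $f|_{\Omega^i}(A^i)_y$ by the definition of a part of an observable, now applied to the restricted surjection $f|_{\Omega^i}$. The only delicate point is that $f|_{\Omega^i}$ may not hit every $y$ in the codomain of $f$; with the convention that an empty sum equals zero, however, $f|_{\Omega^i}(A^i)_y$ remains well-defined for every $y$ and the identity holds termwise. I expect no substantive obstacle, since the entire argument is essentially a bookkeeping exercise in rewriting indexed finite sums.
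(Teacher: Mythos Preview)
Your proposal is correct and follows essentially the same route as the paper: unfold the definition of a part and of the generalized convex combination to obtain (i), then specialize to the two extreme cases for (ii) and (iii). Your remark in (iii) about $f|_{\Omega^i}$ possibly failing to be surjective (handled by the empty-sum convention) is a nice point that the paper passes over silently.
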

\begin{proof}
(i)\enspace Letting $f\colon\cup\Omega _i\to\Omega$ be a surjection, if $y\in\Omega$ we have that
\begin{align*}
f\paren{\bigvee _{i=1}^n\lambda _iA^i}_y&=\paren{\bigvee _{i=1}^n\lambda _iA^i}_{f^{-1}(y)}
  =\sum _x\brac{\paren{\bigvee _{i=1}^n\lambda _iA^i}_x\colon f(x)=y}\\
  &=\sum _{i,x}\brac{\lambda _iA_x^i\colon x\in\Omega _i,f(x)=y}
\end{align*}
(ii)\enspace In this case $\Omega _i=\Omega _j$ for all $i,j=1,\ldots ,n$ so $\cup\Omega _i=\Omega _j$ for all $j=1,2,\ldots ,n$. Hence, by (i) we obtain
\begin{equation*}
f\paren{\sum _{i=1}^n\lambda _iA^i}_y=\sum _{i=1}^n\brac{\lambda _iA_x^i\colon f(x)=y}=\sum _{i=1}^n\lambda _iA_{f^{-1}(y)}^i
  =\sum _{i=1}^n\lambda _if(A^i)_y
\end{equation*}
The result follows. (iii)\enspace In this case $\Omega _i\cap\Omega _j=\emptyset$ for all $i\ne j$. Hence, if $x\in\cup\Omega _i$, then
$x\in\Omega _i$ for a unique $i$. Then $\paren{\bigcup\limits _{i=1}^n\lambda _iA^i}_x=\lambda _iA_x^i$ where $x\in\Omega _j$ and by (i) we have that
\begin{align*}
f\paren{\bigcup _{i=1}^n\lambda _iA^i}_y&=\sum_i\sum _x\paren{\lambda _iA^i\colon f|_{\Omega _i}(x)=y}
   =\sum _i\lambda _iA_{(f|_{\Omega _i})^{-1}}^i(y)\\
   &=\sum _i\lambda _if|_{\Omega _i}(A^i)_y
\end{align*}
The result follows.
\end{proof}

Notice that $\mu\tbullet\paren{\sum\lambda _iA^i}=\sum\lambda _i(\mu\tbullet A^i)$ because
\begin{equation*}
\sqbrac{\mu\tbullet\paren{\sum\lambda _iA^i}}_y=\sum _x\mu _{xy}\sum _i\lambda _iA_x^i=\sum _i\lambda _i\sum _x\mu _{xy}A^i
   =\sum _i\lambda _i(\mu\tbullet A^i)_y
\end{equation*}
In general, $\mu\tbullet\paren{\bigvee\lambda _iA^i}\ne\bigvee\lambda _i(\mu\tbullet A^i)$ because the $A^i$ can have different outcome spaces so $\mu\tbullet A^i$ is not defined.

We call the observables $I^{x_j}$ with outcome space $\Omega =\brac{x_1,x_2,\ldots ,x_n}$ defined by $I_{x_i}^{x_j}=\delta _{ij}I$
\textit{identity observables}. If $A\in\oscript (H)$ with $\Omega _A=\Omega$ and $\lambda\in\parensq{0,1}$, we call $B\in\oscript (H)$ given by $B=(1-\lambda )I^{x_j}+\lambda A$ the observable $A$ \textit{with noise factor} $(1-\lambda )/\lambda$ \cite{hz12,hrsz09}.

\begin{thm}    
\label{thm24}
If $A^i\in\oscript (H)$ and $\lambda _i\in\sqbrac{0,1}$ with $\sum\lambda _i=1$, $i=1,2,\ldots ,n$, then there exists $A\in\oscript (H)$ and surjections $f_j\colon\Omega _A\to\Omega _{A^j}$ such that
\begin{equation*}
f_j(A)=(1-\lambda _j)I^{x_j}+\lambda _jA^j
\end{equation*}
for some $x_j\in\Omega _{A^j}$, $j=1,2,\ldots ,n$.
\end{thm}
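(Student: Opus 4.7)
I would take $A$ to be the convex union
\[
A = \bigcup_{i=1}^n \lambda_i A^i,
\]
relabelling the outcome spaces first if necessary so that the $\Omega_{A^i}$ are pairwise disjoint, and set $\Omega_A = \bigcup_{i=1}^n \Omega_{A^i}$. By the definition of convex union, $A_x = \lambda_i A^i_x$ whenever $x \in \Omega_{A^i}$. For each $j \in \{1,\ldots,n\}$ I would fix an arbitrary $x_j \in \Omega_{A^j}$ and define $f_j \colon \Omega_A \to \Omega_{A^j}$ by $f_j(x) = x$ for $x \in \Omega_{A^j}$ and $f_j(x) = x_j$ for $x \in \Omega_{A^i}$ with $i \neq j$. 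Since $f_j$ restricted to $\Omega_{A^j}$ is the identity, it is a surjection, as required.

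The verification then reduces to computing $f_j(A)_y = \sum\{A_x : f_j(x) = y\}$ for each $y \in \Omega_{A^j}$. When $y \neq x_j$, the preimage $f_j^{-1}(y)$ is the singleton $\{y\} \subseteq \Omega_{A^j}$, so $f_j(A)_y = \lambda_j A^j_y$. When $y = x_j$, the preimage consists of $x_j$ together with the whole of each $\Omega_{A^i}$ with $i \neq j$, so
\[
f_j(A)_{x_j} = \lambda_j A^j_{x_j} + \sum_{i \neq j} \lambda_i \sum_{z \in \Omega_{A^i}} A^i_z = \lambda_j A^j_{x_j} + (1-\lambda_j)\, I,
\]
using that each $A^i$ sums to $I$ and that $\sum_{i \neq j}\lambda_i = 1 - \lambda_j$. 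These two values match exactly the $y$-component of $(1-\lambda_j) I^{x_j} + \lambda_j A^j$, namely $(1-\lambda_j)\, I$ at $y = x_j$ and $0$ elsewhere, shifted by $\lambda_j A^j_y$. Hence $f_j(A) = (1-\lambda_j) I^{x_j} + \lambda_j A^j$ for every $j$.

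I do not anticipate any real obstacle: the theorem is essentially a statement about what the convex-union construction looks like when viewed through coarse-graining maps $f_j$ that collapse all ``foreign'' outcomes onto a single distinguished point of $\Omega_{A^j}$. The only cases worth a brief comment are the degenerate ones: if $\lambda_j = 1$ then all other $\lambda_i$ vanish and one simply takes $A = A^j$ with $f_j$ the identity, while any summand with $\lambda_i = 0$ can be dropped before forming the convex union since $\lambda_i A^i = 0$ contributes no mass to any effect.
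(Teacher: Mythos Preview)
Your proposal is correct and follows essentially the same approach as the paper: both take $A$ to be the convex union $\bigcup_{i=1}^n \lambda_i A^i$ after making the outcome spaces disjoint, define the same surjections $f_j$ that fix $\Omega_{A^j}$ pointwise and send everything else to a chosen $x_j$, and verify the two cases $y=x_j$ and $y\neq x_j$ in the same way. Your treatment is in fact slightly more careful than the paper's, since you explicitly address the degenerate cases $\lambda_j\in\{0,1\}$, which the paper's definition of convex union (stated only for $\lambda_i\in(0,1)$) does not directly cover.
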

\begin{proof}
We can assume, without loss of generality, that $\Omega _{A^i}\cap\Omega _{A^j}=\emptyset$ for $i\ne j$. Letting
$A=\bigcup\limits _{i=1}^n\lambda _iA^i$ we have that $\Omega _A=\bigcup\limits _{i=1}^n\Omega _{A^i}$. Define the functions $f_j\colon\Omega _A\to\Omega _{A^j}$ by $f_j(x)=x$ for all $x\in\Omega _{A^j}$ and $f_j(x)=x_j\in\Omega _{A^j}$ for $x\notin\Omega _{A^j}$. Then we obtain
\begin{align*}
f_j\paren{\bigcup _{i=1}^n\lambda _iA^i}_{x_j}&=\paren{\bigcup\lambda _iA^i}_{f_j^{-1}(x_j)}
  =\sum _i\lambda _iA_{f_j^{-1}(x_j)\cap\Omega _i}^i\\
   &=\sum _{i\ne j}\lambda _iA_{\Omega _i}^i+\lambda _jA_{x_j}^j=\sum _{i\ne j}\lambda _iI^{x_j}+\lambda _jA_{x_j}^j
\end{align*}
and for $x=x_j$ we have that
\begin{equation*}
f_j\paren{\bigcup _{i=1}^n\lambda _iA^i}_x=\paren{\bigcup\lambda _iA^i}_{f_j^{-1}(x)}=\paren{\bigcup\lambda _iA^i}_x=\lambda _jA_x^j
\end{equation*}
Hence, if $A=\bigcup\limits _{i=1}^n\lambda _iA^i$ then
\begin{equation*}
f_j(A)=(1-\lambda )I^{x_j}+\lambda _jA^j\qedhere
\end{equation*}
\end{proof}

Since the $f_j(A)$ in Theorem~\ref{thm24} are all parts of the same observable $A$, we see that the $f_j(A)=(1-\lambda _j)I^{x_j}+\lambda _jA^j$ mutually coexist, $j=1,2,\ldots ,n$. We conclude that any set of observables $A^j$, $j=1,2,\ldots ,n$, ``almost coexist'' in the sense that a noisy version of $A^j$ is a part of an observable $A$, $j=1,2,\ldots ,n$.

For $A,B\in\oscript (H)$, we define their \textit{sequential product} $A\circ B\in\oscript (H)$ \cite{gud120,gud220} by
$\Omega _{A\circ B}=\Omega _A\times\Omega _B$ and
\begin{equation*}
(A\circ B)_{(x,y)}=A_x\circ B_y=A_x^{1/2}B_yA_x^{1/2}
\end{equation*}
If $X\subseteq\Omega _A\times\Omega _B$, we have that
\begin{equation*}
(A\circ B)_X=\sum _{(x,y)\in X}(A\circ B)_{(x,y)}=\sum _{(x,y)\in X}A_x\circ B_y
\end{equation*}
It follows that $(A\circ B)_{\brac{x}\times Y}=A_x\circ B_Y$ but $(A\circ B)_{X\times\brac{y}}\ne A_X\circ B_y$, in general. Moreover,
\begin{equation*}
\Phi _\rho ^{A\circ B}(x,y)=\rmtr\sqbrac{\rho (A\circ B)_{(x,y)}}=\rmtr (\rho A_x\circ B_y)
\end{equation*}
and if $X\subseteq\Omega _A\times\Omega _B$ then
\begin{equation*}
\Phi _\rho ^{A\circ B}(X)=\sum _{(x,y)\in X}\rmtr (\rho A_x\circ B_y)
\end{equation*}

We also define the observable $(B\mid A)$ with $\Omega _{(B\mid A)}=\Omega _B$ and
$(B\mid A)_x=\sum\limits _{x\in\Omega _A}(A_x\circ B_y)$. We call $(B\mid A)$ the observable $B$ \textit{conditioned on} $A$ \cite{gud120,gud220,gud420}. We then have that
\begin{equation*}
\Phi _\rho ^{(B\mid A)}(y)=\rmtr\sqbrac{\rho (B\mid A)_y}=\rmtr\sqbrac{\rho\sum _{x\in\Omega _A}(A_x\circ B_y)}
  =\sum _{x\in\Omega _A}\rmtr (\rho A_x\circ B_y)
\end{equation*}
for all $Y\subseteq\Omega _B$ we obtain
\begin{equation*}
\Phi _\rho ^{(B\mid A)}(Y)=\sum _{y\in Y}\sum _{x\in\Omega _A}\rmtr (\rho A_x\circ B_y)=\sum _{x\in\Omega _A}\rmtr (\rho A_x\circ B_Y)
\end{equation*}
Defining the functions $f\colon\Omega _A\times\Omega _B\to\Omega _B$, $g\colon\Omega _A\times\Omega _B\to\Omega _A$ by
$f(x,y)=y$ for all $x\in\Omega _A$ and $g(x,y)=x$ for all $y\in\Omega _B$ we see that
\begin{align*}
f(A\circ B)_y&=(A\circ B)_{f^{-1}(y)}=\sum _x\brac{(A\circ B)_{(x,y)}\colon f(x,y)=y}\\
  &=\sum _{x\in\Omega _A}(A_x\circ B_y)=(B\mid A)_y\\
\intertext{and}
g(A\circ B)_x&=(A\circ B)_{g^{-1}(x)}=\sum _y\brac{(A\circ B)_{(x,y)}\colon g(x,y)=x}\\
  &=\sum _{y\in\Omega _B}(A_x\circ B_y)=A_x
\end{align*}
Hence, $(B\mid A)=f(A\circ B)$ and $A=g(A\circ B)$. We conclude that $(B\mid A)$ and $A$ coexist. In general, $(B\mid A)$ and $B$ need not coexist. Also $(B\mid A)$ and $(C\mid A)$ need not coexist even though they both coexist with $A$.

\begin{thm}    
\label{thm25}
{\rm{(i)}}\enspace $A\circ\paren{\bigvee\limits _{i=1}^n\lambda _iB^i}=\bigvee\limits _{i=1}^n\lambda _iA\circ B^i$.\newline
{\rm{(ii)}}\enspace $\paren{\bigvee _{i=1}^n\lambda _iB^i\mid A}=\bigvee _{i=1}^n\lambda _i(B^i\mid A)$.
\end{thm}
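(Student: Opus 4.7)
The plan is to verify both identities by direct computation from the definitions, exploiting the fact that for a fixed effect $A_x$ the map $b\mapsto A_x\circ b=A_x^{1/2}bA_x^{1/2}$ is linear in $b$. Since generalized convex combinations and the conditioned observable $(B\mid A)$ are defined effectwise as finite linear combinations, everything should commute with this sequential-product operation once the outcome spaces are matched correctly.

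For part (i), I would first identify the outcome spaces on each side. Letting $\Omega=\bigcup_{i=1}^n\Omega^i$ be the outcome space of $\bigvee_i\lambda_iB^i$, both $A\circ(\bigvee_i\lambda_iB^i)$ and $\bigvee_i\lambda_i(A\circ B^i)$ have outcome space $\Omega_A\times\Omega$ (since in the right-hand side, $\Omega_{A\circ B^i}=\Omega_A\times\Omega^i$ and their union over $i$ is $\Omega_A\times\Omega$). Then for $(x,y)\in\Omega_A\times\Omega$, I would compute
\begin{equation*}
\left(A\circ\bigvee_i\lambda_iB^i\right)_{(x,y)}=A_x^{1/2}\sum_i\{\lambda_iB^i_y\colon y\in\Omega^i\}A_x^{1/2}=\sum_i\{\lambda_i(A\circ B^i)_{(x,y)}\colon y\in\Omega^i\},
\end{equation*}
which by the definition of the generalized convex combination is exactly $\left(\bigvee_i\lambda_i(A\circ B^i)\right)_{(x,y)}$.

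For part (ii), one route is to use part (i) together with the identity $(B\mid A)=f(A\circ B)$ where $f(x,y)=y$, invoking Theorem~\ref{thm23}(i) to push $f$ through the generalized convex combination. A more direct route is simply to compute: for $y\in\Omega=\bigcup_i\Omega^i$,
\begin{equation*}
\left(\bigvee_i\lambda_iB^i\,\Big|\,A\right)_y=\sum_{x\in\Omega_A}A_x\circ\sum_i\{\lambda_iB^i_y\colon y\in\Omega^i\}=\sum_i\Big\{\lambda_i\sum_{x\in\Omega_A}A_x\circ B^i_y\colon y\in\Omega^i\Big\},
\end{equation*}
which is $\sum_i\{\lambda_i(B^i\mid A)_y\colon y\in\Omega^i\}=\bigl(\bigvee_i\lambda_i(B^i\mid A)\bigr)_y$. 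Here I am swapping the finite sums over $x$ and over $i$, which is clearly valid.

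I do not expect any real obstacle; the only subtlety is keeping track, on each side, of which indices $i$ a given outcome $y\in\Omega$ lies in (i.e.\ which terms $\lambda_iB^i_y$ actually contribute to $\bigvee_i\lambda_iB^i$ at $y$). Once the bookkeeping of the restricted sums $\sum_i\{\cdot\colon y\in\Omega^i\}$ is handled consistently, both identities reduce to the linearity of $A_x\circ(\cdot)$ and a rearrangement of finite sums.
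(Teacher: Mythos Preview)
Your proposal is correct and follows essentially the same approach as the paper: both parts are proved by direct effectwise computation, using the linearity of $b\mapsto A_x\circ b$ and, for (ii), interchanging the finite sums over $x\in\Omega_A$ and over $i$. Your explicit check that the outcome spaces match on the two sides of (i) is a small point the paper leaves implicit, but otherwise the arguments are the same.
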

\begin{proof}
(i)\enspace For all $x\in\Omega _A$, $y\in\bigcup\Omega _i$ with $\Omega _i=\Omega _{B_i}$ we have that
\begin{align*}
\paren{A\circ\bigvee _{i=1}^n\lambda _iB^i}_{(x,y)}&=A_x\circ\paren{\bigvee\lambda _iB^i}_y
  =A_x\circ\sum _i\brac{\lambda _iB_y^i\colon y\in\Omega _i}\\
  &=\sum _i\brac{\lambda _iA_x\circ B_y^i\colon y\in\Omega _i}\\
  &=\brac{\bigvee\lambda _i(A\circ B^i)_{(x,y)}\colon y\in\Omega _i}\\
  &=\paren{\bigvee _{i=1}^n\lambda _iA\circ B^i}_{(x,y)}
\end{align*}
The result follows. (ii)\enspace For all $x\in\Omega _A$, $y\in\bigcup\Omega _i$, it follows from (i) that
\begin{align*}
\paren{\bigvee _{i=1}^n\lambda _iB^i\mid A}_{(x,y)}&=\sum _{x\in\Omega _A}A_x\circ\paren{\bigvee\lambda _iB^i}_y
   =\sum _{x\in\Omega _A}\bigvee\lambda _iA_x\circ B_y^i\\
   &=\sum _{x\in\Omega _A}\sum _i\brac{\lambda _iA_x\circ B_y^i\colon y\in\Omega _i}
   =\sum _{y\in\Omega _i}\lambda _i\sum _{x\in\Omega _A}A_x\circ B_y^i\\
   &=\bigvee\paren{\lambda _i\sum _{x\in\Omega _A}A_x\circ B_y^i}=\sqbrac{\bigvee\lambda _i(B^i\mid A)}_{(x,y)}
\end{align*}
The result follows.
\end{proof}

In general, $\paren{\bigvee\lambda _iB^i}\circ A\ne\bigvee\lambda _i(B^i\circ A)$ and
$\paren{A\mid\bigvee\lambda _iB^i}\ne\bigvee\lambda _i(A\mid B^i)$.

If $A\in\oscript (H_1)$ $B\in\oscript (H_2)$, we define the \textit{tensor product} $A\otimes B\in\oscript (H_1\otimes H_2)$ \cite{gud120,gud420} by $\Omega _{A\otimes B}=\Omega _A\times\Omega _B$ and $(A\otimes B)_{(x,y)}=A_x\times B_y$. If $\mu _{xy}$ and $\nu _{uv}$ are transition probabilities, we define the transition probability
\begin{equation*}
\mu\tbullet\nu_{\paren{(x,u),(y,v)}}=\mu _{xy}\nu _{uv}
\end{equation*}
We see that $\mu\tbullet\nu$ is indeed a transition probability because
\begin{equation*}
\sum _{(y,v)}\mu\tbullet\nu _{\paren{(x,u),(y,v)}}=\sum _{y,v}\mu _{xy}\nu _{uv}=\sum _y\mu _{xy}\sum _v\nu _{uv}=1
\end{equation*}
If $f\colon\Omega _A\to\Omega _1$, $g\colon\Omega _B\to\Omega _2$, we define the function
$f\times g\colon\Omega _A\times\Omega _B\to\Omega _1\times\Omega _2$ by
\begin{equation*}
f\times g(x,y)=\paren{f(x),g(y)}
\end{equation*}
The next result summarizes combinations with $A\otimes B$.

\begin{thm}    
\label{thm26}
{\rm{(i)}}\enspace If $A\in\oscript (H_1)$, $B\in\oscript (H_2)$, then $(\mu\tbullet A)\otimes (\nu\tbullet B)=(\mu\tbullet\nu )\tbullet (A\otimes B)$.
{\rm{(ii)}}\enspace If $A\in\oscript (H_1)$, $B\in\oscript (H_2)$, then $f(A)\otimes g(B)=f\times g(A\otimes B)$.
{\rm{(iii)}}\enspace $A\otimes\paren{\bigvee\lambda _iB^i}=\bigvee\lambda _iA\otimes B^i$ and
$\paren{\bigvee\lambda _iB^i}\otimes A=\bigvee\lambda _iB^i\otimes A$.
{\rm{(iv)}}\enspace If $A,C\in\oscript (H_1)$ and $B,D\in\oscript (H_2)$, then
\begin{equation*}
\sqbrac{(A\otimes B)\circ (C\otimes D)}_{\paren{(x,y),(u,v)}}=\sqbrac{(A\circ C)\otimes (B\circ D)}_{\paren{(x,u),(y,v)}}
\end{equation*}
\end{thm}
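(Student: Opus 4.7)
The plan is to verify each identity by unpacking the defining formulas for the tensor product, post-processing, part functions, generalized convex combination, and sequential product, and then showing that both sides agree on each outcome in their common outcome space. Each part reduces to a straightforward computation once the relevant definitions are written out side-by-side; the arguments are essentially the same in structure as the proofs of Theorems~\ref{thm23} and \ref{thm25}.

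For (i), I will evaluate both sides at an outcome $(y,v)$ where $\mu\colon\Omega_A\times\Omega_1\to[0,1]$ and $\nu\colon\Omega_B\times\Omega_2\to[0,1]$ are the given transition probabilities. The left-hand side expands as $\sum_{x,u}\mu_{xy}\nu_{uv}\,A_x\otimes B_u$ using the definition of post-processing together with bilinearity of $\otimes$, while the right-hand side gives the same double sum from the definition of $\mu\tbullet\nu$ applied to $(A\otimes B)_{(x,u)}=A_x\otimes B_u$. For (ii), both sides evaluated at $(y_1,y_2)$ produce $\sum\brac{A_x\otimes B_y\colon f(x)=y_1,\ g(y)=y_2}$: the left side via $A_{f^{-1}(y_1)}\otimes B_{g^{-1}(y_2)}$ and bilinearity, the right side via the identity $(f\times g)^{-1}(y_1,y_2)=f^{-1}(y_1)\times g^{-1}(y_2)$. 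For (iii), evaluating at $(x,y)$ with $y\in\Omega_i$ and again using bilinearity gives $\sum_i\brac{\lambda_i A_x\otimes B^i_y\colon y\in\Omega_i}$ on both sides; the second equation in (iii) is symmetric.

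Part (iv) is the only step that relies on a nontrivial algebraic fact, namely $(P\otimes Q)^{1/2}=P^{1/2}\otimes Q^{1/2}$ for positive operators $P\in\lscript (H_1)$ and $Q\in\lscript (H_2)$. Granted this, the sequential product on the left unfolds as $(A_x\otimes B_y)^{1/2}(C_u\otimes D_v)(A_x\otimes B_y)^{1/2}$, which by the multiplicativity rule $(S_1\otimes S_2)(T_1\otimes T_2)=S_1T_1\otimes S_2T_2$ equals $(A_x^{1/2}C_uA_x^{1/2})\otimes(B_y^{1/2}D_vB_y^{1/2})=(A_x\circ C_u)\otimes(B_y\circ D_v)$, which is precisely the right-hand side. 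The only genuine subtlety anywhere in the proof is invoking the square-root identity for tensor products in (iv); the remaining work is bookkeeping with indices and outcome spaces.
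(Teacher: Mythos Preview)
Your proposal is correct and follows essentially the same approach as the paper: in each part you evaluate both sides at a generic outcome and unwind the definitions, with (iv) hinging on the identity $(A_x\otimes B_y)^{1/2}=A_x^{1/2}\otimes B_y^{1/2}$ together with multiplicativity of the tensor product, exactly as in the paper's proof. The only difference is notational (your outcome labels versus the paper's), not mathematical.
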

\begin{proof}
(i)\enspace For all applicable $x,y,u,v$ we have that
\begin{align*}
\sqbrac{(\mu\tbullet A)\otimes (\nu\tbullet B)}_{(y,z)}&=(\mu\tbullet A)_y\otimes (\nu\tbullet B)_z
  =\sum _x\mu _{xy}A_x\otimes\sum _u\nu _{uz}B_u\\
  &=\sum _{x,u}\mu _{xy}\nu _{uz}(A\otimes B)_{(x,u)}\\
 & =\sum _{x,u}\mu\tbullet\nu _{\paren{(x,u),(y,z)}}(A\circ B)_{(x,u)}\\
  &=\sqbrac{(\mu\tbullet\nu )\tbullet (A\otimes B)}_{(y,z)}
\end{align*}
The result follows. (ii)\enspace Letting $h=f\times g$ we obtain
\begin{align*}
\sqbrac{f(A)\otimes g(B)}_{(u,v)}&=f(A)_u\otimes g(B)_v=A_{f^{-1}(u)}\otimes B_{g^{-1}(v)}\\
   &=\sum _x\brac{A_x\colon f(x)=u}\otimes\sum _y\brac{B_y\colon g(y)=v}\\
   &=\sum _{x,y}\brac{A_x\otimes B_y\colon f(x)=u, g(y)=v}\\
   &=\sum _{x,y}\brac{A_x\otimes B_y\colon h(x,y)=(u,v)}\\
   &=\sum _{x,y}\brac{(A\otimes B)_{(x,y)}\colon h(x,y)=(u,v)}\\
   &=(A\otimes B)_{h^{-1}(u,v)}=\sqbrac{h(A\otimes B)}_{(u,v)}
\end{align*}
The result follows. (iii)\enspace For all applicable $x,y$ we have that
\begin{align*}
\sqbrac{A\otimes\paren{\bigvee\lambda _iB^i}}_{(x,y)}&=A_x\otimes\paren{\bigvee\lambda _iB^i}_y
   =A_x\otimes\sum _i\brac{\lambda _iB_y^i\colon y\in\Omega _i}\\
   &=\sum _i\brac{\lambda _iA_y\otimes B_y^i\colon y\in\Omega _i}\\
   &=\sum _i\brac{\lambda _iA_x\otimes B_y^i\colon y\in\Omega _i}\\
   &=\paren{\bigvee\lambda _iA\otimes B^i}_{(x,y)}
\end{align*}
The result follows. (iv)\enspace For all applicable $x,y,u,v$ we obtain
\begin{align*}
&\sqbrac{(A\otimes B)\circ (C\otimes D)}_{\paren{(x,y),(u,v)}}\\
   &\qquad =(A\otimes B)_{(x,y)}\circ (C\otimes D)_{(u,v)}=(A_x\otimes B_y)\circ (C_u\otimes D_v)\\
   &\qquad =(A_x\otimes B_y)^{1/2}(C_u\otimes D_v)(A_x\otimes B_y)^{1/2}\\
   &\qquad =(A_x^{1/2}\otimes B_y^{1/2})(C_u\otimes D_v)(A_x^{1/2}\otimes B_y^{1/2})\\
   &\qquad =A_x^{1/2}C_uA_x^{1/2}\otimes B_y^{1/2}D_vB_y^{1/2}=A_x\circ C_u\otimes B_y\circ D_v\\
   &\qquad =(A\circ C)_{(x,u)}\otimes (B\circ D)_{(y,v)}=\sqbrac{(A\circ C)\otimes (B\circ D)}_{\paren{(x,u),(y,v)}}\qedhere
\end{align*}
\end{proof}

We see from Theorem~\ref{thm26}(iv) that $(A\otimes B)\circ (C\otimes D)\ne (A\circ C)\otimes (B\otimes D)$, in general. It also follows from Theorem~\ref{thm26}(ii) that if $A,B\in\oscript (H_1)$ coexist and $C,D\in\oscript (H_2)$ coexist, then $A\otimes C$ and $B\otimes D$ coexist. Indeed, we have observables $E\in\oscript (H_1)$, $F\in\oscript (H_2)$ and functions $f_1,g_2,f_2,g_2$ such that $A=f_1(E)$, $B=g_1(E)$,
$C=f_2(F)$, $D=g_2(F)$. Applying Theorem~\ref{thm26}(ii) gives
\begin{align*}
A\otimes C&=f_1(E)\otimes f_2(F)=f_1\times f_2(E\otimes F)\\
B\otimes D&=g_1(E)\otimes g_2(F)=g_1\times g_2(E\otimes F)
\end{align*}
Hence, $A\otimes C$ and $B\otimes D$ coexist.

We have gone from $\oscript (H_1)$, $\oscript (H_2)$ to obtain observables in $\oscript (H_1\otimes H_2)$. We can also go the other way to reduce observables in $\oscript (H_1\otimes H_2)$ to elements of $\oscript (H_1)$ and $\oscript (H_2)$. If $A\in\oscript (H_1\otimes H_2)$, we define the \textit{reduced observables} $A^1\in\oscript (H_1)$, $A^2\in\oscript (H_2)$ by $A_x^1=\tfrac{1}{n_2}\,\rmtr _{H_2}A_x$ for all
$x\in\Omega _A$ where $n_2=\dim H_2$ and $\rmtr _{H_2}$ is the partial trace with respect to $H_2$ \cite{gud120,gud420,hz12} and similarly $A_x^2=\tfrac{1}{n_1}\,\rmtr _{H_1}A_x$. To check that $A^1$ is indeed an observable, we see that $A_x^1\ge 0$ and
\begin{align*}
\sum _{x\in\Omega _A}A_x^1&=\frac{1}{n_2}\sum _{x\in\Omega _A}\rmtr _{H_2}(A_x)
  =\frac{1}{n_2}\,\rmtr _{H_2}\paren{\sum _{x\in\Omega _A}A_x}=\tfrac{1}{n_2}\,\rmtr _{H_2}(I)\\
  &=\tfrac{1}{n_2}\,\rmtr _{H_2}(I_1\otimes I_2)=\tfrac{1}{n_2}\,(I_2)I_1=I_1
\end{align*}
where $I_1$, $I_2$ are the identity operators on $H_1$, $H_2$, respectively.

If $A\in\oscript (H_1)$, $B\in\oscript (H_2)$, we have the observable $C=A\otimes B\in\oscript (H_1\otimes H_2)$. It is interesting to note that
\begin{equation*}
(A\otimes B)_{(x,y)}^1=\tfrac{1}{n_2}\,\rmtr _{H_2}(A\otimes B)_{(x,y)}=\tfrac{1}{n_2}\,\rmtr _{H_2}(A_x\otimes B_y)
  =\tfrac{1}{n_2}\,(\rmtr B_y)A_x
\end{equation*}
Hence, $(A\otimes B)_{\brac{x}\times\Omega _B}^1=A_x$ and $(A\otimes B)_{\Omega _A\times\brac{y}}^1=\tfrac{1}{n_2}(\rmtr B_y)I_1$. In a similar way, $(A\otimes B)_{(x,y)}^2=\tfrac{1}{n_1}(\rmtr A_x)B_y$. For $A,B\in\oscript (H_1\otimes H_2)$ we obtain
\begin{equation*}
(A\circ B)_{(x,y)}^1=\tfrac{1}{n_2}\,\rmtr _{H_2}(A\circ B)_{(x,y)}=\tfrac{1}{n_2}\,\rmtr _{H_2}A_x\circ B_y
   =\tfrac{1}{n_2}\,\rmtr _{H_2}A_x^{1/2}B_yA_x^{1/2}
\end{equation*}
On the other hand, we have that
\begin{align*}
(A^1\circ B^1)_{(x,y)}&=A_x^1\circ B_y^1=\tfrac{1}{(n_2)^2}(\rmtr _{H_2}A_x^1)\circ (\rmtr _{H_2}B_y^1)\\
   &=\tfrac{1}{(n_2)^2}(\rmtr _{H_2}A^1)^{1/2}(\rmtr _{H_2}B_y^1)(\rmtr _{H_2}A^1)^{1/2}
\end{align*}
It follows that $(A\circ B)^1\ne A^1\circ B^1$, in general, so sequential products need not be preserved under reduction. The next result shows that the other combinations we considered are preserved.

\begin{thm}    
\label{thm27}
{\rm{(i)}}\enspace If $A\in\oscript (H_1\otimes H_2)$, then $f(A)^i=f(A^i)$, $i=1,2$.
{\rm{(ii)}}\enspace If $A,B\in\oscript (H_1\otimes H_2)$ coexist, then $A^i$ and $B^i$ coexist, $i=1,2$.
{\rm{(iii)}}\enspace If $A\in\oscript (H_1\otimes H_2)$, then $(\mu\tbullet A)^i=\mu\tbullet A^i$, $i=1,2$.
{\rm{(iv)}}\enspace If $A^j\in\oscript (H_1\otimes H_2)$, then $\paren{\bigvee\lambda _jA^j}^i=\bigvee\lambda _j(A^j)^i$, $i=1,2$.
\end{thm}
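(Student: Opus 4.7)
\medskip

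The plan is to prove all four parts by a single observation: the reduction map $A_x \mapsto A^i_x = \tfrac{1}{n_{3-i}}\rmtr_{H_{3-i}}A_x$ is $\mathbb{R}$-linear in the effect arguments. Each of the four constructions (taking a part, coexistence, post-processing, generalized convex combination) defines the effects of the new observable as a non-negative linear combination of the effects of the input observables, so the reduction commutes with that combination term-by-term.

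For part (i), I would start from the definition $f(A)_y = A_{f^{-1}(y)} = \sum\brac{A_x \colon f(x)=y}$ and push $\tfrac{1}{n_2}\rmtr_{H_2}$ through the finite sum using linearity of the partial trace. The resulting sum is $\sum\brac{A^1_x\colon f(x)=y} = f(A^1)_y$, and similarly for $i=2$. Part (ii) is then immediate from (i): if $A,B$ coexist via $C\in\oscript(H_1\otimes H_2)$ with $A=f(C)$, $B=g(C)$, then $A^i=f(C)^i=f(C^i)$ and $B^i=g(C)^i=g(C^i)$ by (i), so $A^i,B^i$ are both parts of the single observable $C^i\in\oscript(H_i)$ and hence coexist.

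Part (iii) follows by the same linearity argument: $(\mu\tbullet A)^i_y = \tfrac{1}{n_{3-i}}\rmtr_{H_{3-i}}\sum_x \mu_{xy}A_x = \sum_x \mu_{xy}A^i_x = (\mu\tbullet A^i)_y$. For part (iv), I would unpack the generalized convex combination definition
\begin{equation*}
\paren{\bigvee\lambda_j A^j}_x = \sum_j\brac{\lambda_j A^j_x\colon x\in\Omega^j},
\end{equation*}
apply $\tfrac{1}{n_{3-i}}\rmtr_{H_{3-i}}$, and pull it inside the finite sum to obtain $\sum_j\brac{\lambda_j (A^j)^i_x\colon x\in\Omega^j}$, which is exactly $\paren{\bigvee\lambda_j (A^j)^i}_x$. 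One should also verify in passing that $(A^j)^i$ has outcome space $\Omega^j$, which is immediate since reduction does not alter the index set.

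There is no real obstacle here; the only mild subtlety is bookkeeping with outcome spaces in (iv), where the supports $\Omega^j$ need to be tracked correctly inside the sum. Everything else is a direct application of the linearity of the partial trace combined with the definitions given earlier in the paper, in particular the computation of $\sum_x A^i_x = I_i$ carried out just before the theorem, which guarantees that the reduced objects are genuine observables on $H_i$.
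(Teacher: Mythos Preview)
Your proposal is correct and follows essentially the same route as the paper: each part is proved by pushing the linear reduction map $\tfrac{1}{n_{3-i}}\rmtr_{H_{3-i}}$ through the finite sum defining the construction, and part~(ii) is deduced from~(i) exactly as you describe. The only difference is presentational---you frame the common linearity principle explicitly up front, whereas the paper simply carries out the four computations in turn.
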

\begin{proof}
We prove these results for $i=1$ and the proofs for $i=2$ are similar. (i)\enspace For all $y\in\Omega _{f(A)}$ we obtain
\begin{align*}
f(A)_y^1&=\tfrac{1}{n_2}\,\rmtr _{H_2}f(A)_y=\tfrac{1}{n_2}\,\rmtr _{H_2}A_{f^{-1}(y)}
   =\tfrac{1}{n_2}\rmtr _{H_2}\paren{\sum _x\brac{A_x\colon f(x)=y}}\\
   &=\sum _x\brac{\tfrac{1}{n_2}\,\rmtr _{H_2}A_x\colon f(x)=y}=\sum _x\brac{A_x^1\colon f(x)=y}=f(A^1)_y
\end{align*}
The result follows. (ii)\enspace If $A,B$ coexist, there exist $C\in\oscript (H_1\otimes H_2)$ and functions $f,g$ such that $A=f(C)$, $B=g(C)$. Applying (i) gives $A^1=f(C)^1=f(C^1)$ and $B^1=g(C)^1=g(C)^1$. Hence, $A^1$ and $B^1$ coexist.
(iii)\enspace For all applicable $y$ we have that
\begin{align*}
(\mu\tbullet A)_y^1&=\tfrac{1}{n_2}\,\rmtr _{H_2}(\mu\tbullet A)_y
   =\tfrac{1}{n_2}\,\rmtr _{H_2}\paren{\sum _x\mu _{xy}A_x}=\sum _x\mu _{xy}\,\tfrac{1}{n_2}\,\rmtr _{H_2}A_x\\
  &=\sum _x\mu _{xy}A_x^1=(\mu\tbullet A^1)_y
\end{align*}
The result follows. (iv)\enspace For all applicable $x$ we obtain
\begin{align*}
\paren{\bigvee\lambda _iA^i}_x^1&=\tfrac{1}{n_2}\,\rmtr _{H_2}\paren{\bigvee\lambda _iA^i}_x
  =\tfrac{1}{n_2}\,\rmtr _{H_2}\sqbrac{\sum _i\brac{\lambda _iA_x^i\colon x\in\Omega _i}}\\
  &=\sum _i\brac{\lambda _i\rmtr _{H_2}A_x^i\colon x\in\Omega _i}=\sum _i\brac{\lambda _i(A_x^i)^1\colon x\in\Omega _i}\\
  &=\sqbrac{\bigvee\lambda _i(A^i)^1}_x
\end{align*}
This proves the result.
\end{proof}

Although we have not found a counterexample, we conjecture that the converse of Theorem~\ref{thm27}(ii) does not hold.

\section{Instruments}  
For instruments, we define post-processing $\mu\tbullet\iscript$, parts $f(\iscript )$, coexistence and generalized convex combinations
$\bigvee\lambda _i\iscript ^i$ as we did for observables. The next theorem shows that these definitions are consistent.

\begin{thm}    
\label{thm31}
{\rm{(i)}}\enspace $f(\iscript )^\wedge =f(\iscripthat\,)$
{\rm{(ii)}}\enspace $\paren{\bigvee\lambda _i\iscript ^i}^\wedge =\bigvee\lambda _i\iscript ^{i\wedge}$.
{\rm{(iii)}}\enspace $(\mu\tbullet\iscript )^\wedge =\mu\tbullet\iscripthat$.
\end{thm}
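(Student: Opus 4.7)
The plan is to exploit the defining property of $\iscripthat$: for every $\iscript\in\rmin(H)$, the observable $\iscripthat$ is the unique element of $\oscript(H)$ satisfying $\rmtr[\iscript_x(\rho)]=\rmtr(\rho\iscripthat_x)$ for all $x\in\Omega_\iscript$ and $\rho\in\sscript(H)$. So for each of the three identities, it suffices to compute the distribution of the instrument on the left-hand side in an arbitrary state $\rho$, rewrite it in the form $\rmtr(\rho B_y)$ for $B$ the observable on the right-hand side, and invoke uniqueness.

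For (i), I start from the definition $f(\iscript)_y=\iscript_{f^{-1}(y)}=\sum\{\iscript_x:f(x)=y\}$, so by linearity of trace and the defining property of $\iscripthat$,
\begin{equation*}
\rmtr\sqbrac{f(\iscript)_y(\rho)}=\sum_x\brac{\rmtr[\iscript_x(\rho)]:f(x)=y}=\sum_x\brac{\rmtr(\rho\iscripthat_x):f(x)=y}=\rmtr\sqbrac{\rho f(\iscripthat)_y},
\end{equation*}
which by uniqueness gives $f(\iscript)^\wedge =f(\iscripthat)$. For (iii), the same calculation with $\mu_{xy}$ as coefficients yields $\rmtr[(\mu\tbullet\iscript)_y(\rho)]=\sum_x\mu_{xy}\rmtr(\rho\iscripthat_x)=\rmtr[\rho(\mu\tbullet\iscripthat)_y]$, and again uniqueness finishes the argument.

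For (ii), I use the definition $(\bigvee\lambda_i\iscript^i)_x=\sum_i\{\lambda_i\iscript^i_x:x\in\Omega_i\}$ on the outcome space $\Omega=\bigcup\Omega_i$. By linearity of trace and the defining property applied to each $\iscript^i$,
\begin{equation*}
\rmtr\sqbrac{\paren{\bigvee\lambda_i\iscript^i}_x(\rho)}=\sum_i\brac{\lambda_i\rmtr[\iscript^i_x(\rho)]:x\in\Omega_i}=\sum_i\brac{\lambda_i\rmtr(\rho\iscript^{i\wedge}_x):x\in\Omega_i},
\end{equation*}
which equals $\rmtr[\rho(\bigvee\lambda_i\iscript^{i\wedge})_x]$, and uniqueness concludes.

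There is no real obstacle: all three identities are essentially bookkeeping consequences of the linearity of the trace and the sum defining each combination, together with the observation that the combination operations on instruments and the corresponding operations on observables have the same coefficient structure on each outcome. The only care needed is to handle (ii) correctly on the outcome set $\Omega=\bigcup\Omega_i$, restricting summation to those $i$ with $x\in\Omega_i$ in exactly the same way on both sides.
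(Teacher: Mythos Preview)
Your proposal is correct and follows essentially the same approach as the paper: both compute the trace $\rmtr[\cdot(\rho)]$ of each combined instrument at an outcome, rewrite it via linearity and the defining identity $\rmtr[\iscript_x(\rho)]=\rmtr(\rho\iscripthat_x)$ as $\rmtr(\rho\,\cdot)$ for the corresponding combined observable, and then invoke uniqueness of the measured observable. The only cosmetic difference is the direction in which the chains of equalities are written.
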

\begin{proof}
(i)\enspace For all $x\in\Omega _\iscript$ and $\rho\in\sscript (H)$ we have that
\begin{align*}
\rmtr\sqbrac{\rho f(\iscripthat\,)_x}&=\rmtr\sqbrac{\rho\iscripthat _{f^{-1}(x)}}=\rmtr\sqbrac{\iscript _{f^{-1}(x)}(\rho )}\\
   &=\rmtr\sqbrac{f(\iscript )_x(\rho )}=\rmtr\sqbrac{\rho f(\iscript )_x^\wedge}
\end{align*}
Hence, $f(\iscript )_x^\wedge=f(\iscripthat\,)_x$ for all $x\in\Omega _\iscript$ and the result follows.
(ii)\enspace For all $x\in\bigcup\Omega _i$ where $\Omega _i=\Omega _{\iscript _i}$ we obtain
\begin{align*}
\rmtr\sqbrac{\rho\paren{\bigvee\lambda _i\iscript ^{i\wedge}}_x}
   &=\rmtr\sqbrac{\rho\sum _x\brac{\lambda _i\iscript _x^{i\wedge}\colon x\in\Omega _i}}
   =\sum _x\brac{\lambda _i\rmtr (\rho\iscript _x^{i\wedge})\colon x\in\Omega _i}\\
   &=\sum _x\brac{\lambda _i\rmtr\sqbrac{\iscript _x^i(\rho )}\colon x\in\Omega _i}\\
   &=\rmtr\sqbrac{\sum _x\brac{\lambda _i\iscript _x^i(\rho )\colon x\in\Omega _i}}\\
   &=\rmtr\sqbrac{\paren{\bigvee\lambda _i\iscript ^i}_x(\rho )}=\rmtr\sqbrac{\rho\paren{\bigvee\lambda _i\iscript}_x^\wedge}
\end{align*}
Hence, $\bigvee\lambda _i\iscript _x^{i\wedge}=\paren{\bigvee\lambda _i\iscript ^i}_x^\wedge$ for all $x\in\bigcup\Omega _i$ and this gives the result.
(iii)\enspace For all $y\in\Omega _{\mu\tbullet\iscript}$ we have that
\begin{align*}
\rmtr\sqbrac{\rho (\mu\tbullet\iscripthat\,)_y}&=\rmtr\paren{\rho\sum _x\mu _{xy}\iscripthat _x}=\sum _x\mu _{xy}\rmtr (\rho\iscripthat _x))
   =\sum _x\mu _{xy}\rmtr\sqbrac{\iscript _x(\rho )}\\
   &=\rmtr\sqbrac{\sum _x\mu _{xy}\iscript _x(\rho )}=\rmtr\sqbrac{(\mu\tbullet\iscript )_y(\rho )}=\rmtr\sqbrac{\rho (\mu\tbullet\iscript )_y^\wedge}
\end{align*}
We conclude that $(\mu\tbullet\iscript )_y^\wedge =(\mu\tbullet\iscripthat\,)_y$ for all $y\in\Omega _{\mu\tbullet\iscript}$ and this proves the result.
\end{proof}

Applying Theorem~\ref{thm31}(i) we obtain the following.

\begin{cor}    
\label{cor32}
If $\iscript ,\jscript\in\rmin (H)$ coexist, then $\iscripthat$, $\jscripthat$ coexist.
\end{cor}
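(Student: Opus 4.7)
The plan is to chain the definition of coexistence for instruments with Theorem~\ref{thm31}(i). By the definition of coexistence extended from observables to instruments, $\iscript$ and $\jscript$ coexisting means there is a $\kscript\in\rmin (H)$ with outcome space $\Omega _\kscript$ and surjections $f\colon\Omega _\kscript\to\Omega _\iscript$, $g\colon\Omega _\kscript\to\Omega _\jscript$ such that $\iscript =f(\kscript )$ and $\jscript =g(\kscript )$.

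Next, I would apply Theorem~\ref{thm31}(i) to each of these identifications. That theorem says $f(\kscript )^\wedge =f(\kscripthat\,)$, so taking hats of both sides of $\iscript =f(\kscript )$ yields $\iscripthat =f(\kscripthat\,)$. Likewise, $\jscripthat =g(\kscripthat\,)$. Since $\kscripthat\in\oscript (H)$ with outcome space $\Omega _\kscript$ and since $f,g$ are already surjections onto $\Omega _{\iscripthat}=\Omega _\iscript$ and $\Omega _{\jscripthat}=\Omega _\jscript$ respectively, this exhibits $\iscripthat$ and $\jscripthat$ as parts of the common observable $\kscripthat$. By the definition of coexistence for observables, $\iscripthat$ and $\jscripthat$ coexist.

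Essentially there is no obstacle here: the corollary is a one-line consequence of the compatibility result Theorem~\ref{thm31}(i) together with the bookkeeping observation that the outcome spaces of $\iscript$ and $\iscripthat$ agree. The only thing to check carefully is that the notion of ``part'' for instruments (via a surjection on outcome spaces) matches the observable notion after applying $(\,\cdot\,)^\wedge$, which is exactly the content of Theorem~\ref{thm31}(i).
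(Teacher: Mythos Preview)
Your proof is correct and follows exactly the approach the paper intends: the paper presents Corollary~\ref{cor32} as an immediate consequence of Theorem~\ref{thm31}(i), and you have spelled out precisely the argument---unfold coexistence for instruments via a common $\kscript$, apply $f(\kscript)^\wedge=f(\kscripthat\,)$ and $g(\kscript)^\wedge=g(\kscripthat\,)$, and read off coexistence of the measured observables.
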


Unlike the other concepts, we must define sequential products of instruments differently from that of observables. If
$\iscript ,\jscript\in\rmin (H)$, then their \textit{sequential product} $\iscript\circ\jscript\in\rmin (H)$ is defined by
$\Omega _{\iscript\circ\jscript}=\Omega _\iscript\times\Omega _\jscript$ and
$(\iscript\circ\jscript )_{x,y}(\rho )=\jscript _y\sqbrac{\iscript _x(\rho )}$ for all $\rho\in\sscript (H)$. We define the
\textit{conditional instrument} $(\jscript\mid\iscript )\in\rmin (H)$ by $\Omega _{(\jscript\mid\iscript )}=\Omega _\jscript$ and
\begin{equation*}
(\jscript\mid\iscript )_y(\rho )=\sum _{x\in\Omega _\iscript}(\iscript\circ\jscript )_{(x,y)}(\rho )
   =\sum _{x\in\Omega _\iscript}\jscript _y\paren{\iscript _x(\rho )}=\jscript _y\sqbrac{\iscript _{\Omega _\iscript (\rho )}}
\end{equation*}
Of course, $\iscript _{\Omega _\iscript}$ is the channel given by $\iscript$. Unlike for observables, the next theorem has a second part.

\begin{thm}    
\label{thm33}
{\rm{(i)}}\enspace $\iscript\circ\paren{\bigvee\lambda _i\jscript ^i}\!=\!\bigvee (\lambda _i\iscript\circ\jscript ^i)$.
{\rm{(ii)}}\enspace $\paren{\bigvee\lambda _i\jscript ^i}\circ\iscript\!=\!\bigvee (\lambda _i\jscript ^i\circ\iscript )$
\end{thm}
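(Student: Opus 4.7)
The plan is to verify both identities pointwise: for every outcome in the product outcome space and every state $\rho\in\sscript(H)$, I will show that both sides of each identity produce the same partial state. The verification rests on two facts — the definition of the instrument sequential product, $(\iscript\circ\jscript)_{(x,y)}(\rho)=\jscript_y\sqbrac{\iscript_x(\rho)}$, and the linearity of each operation $\iscript_x:\lscript(H)\to\lscript(H)$.

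For part (i), fix $x\in\Omega_\iscript$ and $y\in\bigcup\Omega_i$ with $\Omega_i=\Omega_{\jscript^i}$. The left-hand side evaluated at $\rho$ is $\paren{\bigvee\lambda_i\jscript^i}_y\sqbrac{\iscript_x(\rho)}$, which by the definition of the generalized convex combination expands to $\sum_i\brac{\lambda_i\jscript^i_y\sqbrac{\iscript_x(\rho)}:y\in\Omega_i}$. Each summand is exactly $\lambda_i(\iscript\circ\jscript^i)_{(x,y)}(\rho)$, so the total matches $\paren{\bigvee\lambda_i(\iscript\circ\jscript^i)}_{(x,y)}(\rho)$, establishing the equality. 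This step uses nothing beyond definition-chasing.

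For part (ii), fix $y\in\bigcup\Omega_i$ and $x\in\Omega_\iscript$. The left-hand side evaluated at $\rho$ is $\iscript_x\sqbrac{\paren{\bigvee\lambda_i\jscript^i}_y(\rho)}=\iscript_x\sqbrac{\sum_i\brac{\lambda_i\jscript^i_y(\rho):y\in\Omega_i}}$. The key move is to use the linearity of $\iscript_x$ to pull it inside the sum, obtaining $\sum_i\brac{\lambda_i\iscript_x\sqbrac{\jscript^i_y(\rho)}:y\in\Omega_i}$. Recognizing each term as $\lambda_i(\jscript^i\circ\iscript)_{(y,x)}(\rho)$ identifies this with $\paren{\bigvee\lambda_i(\jscript^i\circ\iscript)}_{(y,x)}(\rho)$.

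There is no real obstacle here; both parts reduce to unfolding the definitions and invoking linearity of the individual operations. The conceptually interesting point, which I would briefly note after the proof, is why part (ii) has an instrument analog but \emph{not} an observable analog (the paper explicitly records $\paren{\bigvee\lambda_iB^i}\circ A\ne\bigvee\lambda_i(B^i\circ A)$ in general). The observable sequential product $B\circ A=B^{1/2}AB^{1/2}$ is nonlinear in $B$, whereas the instrument sequential product is literal composition of linear maps and is therefore linear in each argument; this is precisely the ingredient used in the step where $\iscript_x$ is moved through the sum.
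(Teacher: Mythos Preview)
Your proof is correct and follows essentially the same route as the paper's: both verify each identity pointwise on outcomes and states by unfolding the definitions of the instrument sequential product and the generalized convex combination, invoking linearity of the operation $\iscript_x$ in part~(ii). Your closing remark explaining why (ii) holds for instruments but fails for observables (linearity of composition versus nonlinearity of $B\mapsto B^{1/2}AB^{1/2}$) is a nice addition not present in the paper.
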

\begin{proof}
We let $x\in\Omega _\iscript$, $y\in\bigcup\Omega _i$ where $\Omega _i=\Omega _{\jscript ^i}$ and $\rho\in\sscript (H)$ be arbitrary elements.
(i)\enspace The following steps hold:
\begin{align*}
\paren{\iscript\circ\bigvee\lambda _i\jscript ^i}_{(x,y)}(\rho )&=\paren{\bigvee\lambda _i\jscript ^i}_y\paren{\iscript _x(\rho )}
   =\sum _y\brac{\lambda _i\jscript _y^i\sqbrac{\iscript _x(\rho )}\colon y\in\Omega _i}\\
   &=\sum _y\brac{\lambda _i(\iscript\circ\jscript )_{(x,y)}^i(\rho )\colon y\in\Omega _i}\\
   &=\paren{\bigvee\lambda _i\iscript\circ\jscript ^i}_{(x,y)}(\rho )
\end{align*}
The result now follows. (ii)\enspace The following steps hold:
\begin{align*}
\paren{\bigvee\lambda _i\jscript ^i\circ\iscript}_{(x,y)}(\rho )&=\iscript _y\paren{\bigvee\lambda _i\jscript _x^i(\rho )}
   =\iscript _y\sqbrac{\sum _x\brac{\lambda _i\jscript _x^i(\rho )\colon x\in\Omega _i}}\\
   &=\sum _x\brac{\lambda _i\iscript _y\paren{\jscript _x^i(\rho )}\colon x\in\Omega _i}\\
   &=\sum _x\brac{\lambda _i(\jscript ^i\circ\iscript )_{(x,y)}(\rho )\colon x\in\Omega _i}\\
   &=\bigvee\paren{\lambda _i\jscript ^i\circ\iscript}_{(x,y)}(\rho )
\end{align*}
The result follows.
\end{proof}

Most of the theorems in Section~2 concerning observables hold for instruments and the proofs are similar so we shall not repeat them. We will mainly concentrate on various types of instruments that we now define. We say that an instrument $\iscript\in\rmin (H)$ is:
\begin{description}
\item[\textit{Kraus}]
if it has the form $\iscript _x(\rho )=S_x\rho S_x^*$ where $S_x\in\lscript (H)$ with $\sum S_x^*S_x=I$,
\item[\textit{L\"uders}]
if $\iscript _x(\rho )=\lscript _x^A(\rho )=A_x^{1/2}\rho A_x^{1/2}$ where $A\in\oscript (H)$,
\item[\textit{Trivial}]
if $\iscript _x(\rho )=\rmtr (\rho A_x)\alpha$ where $A\in\oscript (H)$, $\alpha\in\sscript (H)$,
\item[\textit{Semitrivial}]
if $\iscript _x(\rho )=\rmtr (\rho A_x)\alpha _x$ where $A\in\oscript (H)$, $\alpha _x\in\sscript (H)$.
\end{description}
Notice that a L\"uders instrument is a special case of a Kraus instrument and a trivial instrument is a special case of a semitrivial instrument. An interesting example of a semitrivial instrument is
\begin{equation*}
\iscript _x(\rho )=\frac{\rmtr (\rho A_x)}{\rmtr (A_x)}\,A_x
\end{equation*}
It is easy to check that the observable measured by the Kraus instrument is $\iscripthat _x=S_x^*S_x$ and the other three types of instruments measure the observable $A$. This also shows that an observable is measured by many different instruments. We call $S_x$ the
\textit{operators} for the Kraus instrument $\iscript _x(\rho )=S_x\rho S_x^*$. We say that two observables $A,B\in\oscript (H)$
\textit{commute} if $A_xB_y=B_yA_x$ for all $x\in\Omega _A$, $y\in\Omega _B$.

\begin{thm}    
\label{thm34}
{\rm{(i)}}\enspace $(\lscript ^A\circ\lscript ^B)^\wedge =(\lscript ^A)^\wedge\circ (\lscript ^B)^\wedge =A\circ B$.
{\rm{(ii)}}\enspace $\lscript ^A\circ\lscript ^B$ is a L\"uders instrument if and only if $A$ and $B$ commute and ißßn this case
$\lscript ^A\circ\lscript ^B=\lscript ^{A\circ B}$.
{\rm{(iii)}}\enspace If $\iscript$ and $\jscript$ are Kraus instruments with operators $S_x$, $T_y$, respectively, then $\iscript\circ\jscript$ is a Kraus instrument with operators $T_yS_x$.
{\rm{(iv)}}\enspace If $\iscript$, $\jscript$ are simitrivial with observables $A$, $B$ and states $\alpha _x$, $\beta _y$, respectively, then
$\iscript\circ\jscript$ is semitrivial with observable $C_{(x,y)}=\rmtr (\alpha _xB_y)A_x$ and states $\beta _y$. Moreover, $(\jscript\mid\iscript )$ is semitrivial with observable $\rmpost _{(\alpha ,B)}(\iscripthat\,)$ and states $\beta _y$.
\end{thm}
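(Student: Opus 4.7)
The plan is to dispatch part (i) by a direct trace computation, handle the two directions of (ii) asymmetrically (elementary one way, via a known commutativity criterion the other), and close out (iii) and (iv) by straightforward algebraic composition. For (i), I test the candidate effect against an arbitrary state: cyclicity of the trace gives
\begin{equation*}
\rmtr\sqbrac{\rho (\lscript ^A\circ\lscript ^B)^\wedge _{(x,y)}}=\rmtr\sqbrac{B_y^{1/2}A_x^{1/2}\rho A_x^{1/2}B_y^{1/2}}=\rmtr\sqbrac{\rho (A_x\circ B_y)},
\end{equation*}
and since this holds for every $\rho\in\sscript (H)$ the effects coincide. The middle identity $(\lscript ^A)^\wedge\circ (\lscript ^B)^\wedge =A\circ B$ is then immediate because the observable measured by $\lscript ^X$ is $X$.

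For (ii), the ``if'' direction uses the standard fact that the positive square roots of commuting positive operators commute: if $A_xB_y=B_yA_x$ then $A_x^{1/2}$ and $B_y^{1/2}$ commute, so $A_x\circ B_y=(A_x^{1/2}B_y^{1/2})^2$ and $(A_x\circ B_y)^{1/2}=A_x^{1/2}B_y^{1/2}=B_y^{1/2}A_x^{1/2}$; the L\"uders form of $A\circ B$ then reduces directly to $B_y^{1/2}A_x^{1/2}\rho A_x^{1/2}B_y^{1/2}$, giving $\lscript ^A\circ\lscript ^B=\lscript ^{A\circ B}$. For the ``only if'' direction, suppose $\lscript ^A\circ\lscript ^B=\lscript ^C$; by (i) one must have $C_{(x,y)}=A_x\circ B_y$, so
\begin{equation*}
B_y^{1/2}A_x^{1/2}\rho A_x^{1/2}B_y^{1/2}=(A_x\circ B_y)^{1/2}\rho (A_x\circ B_y)^{1/2}
\end{equation*}
for every state $\rho$. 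Specializing to $\rho =I$ collapses this to $B_y\circ A_x=A_x\circ B_y$, and the Gudder--Nagy result \cite{gn01} on symmetry of the sequential product then upgrades this to $A_xB_y=B_yA_x$.

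Parts (iii) and (iv) are direct. For (iii), iteration gives $\jscript _y\sqbrac{S_x\rho S_x^*}=T_yS_x\rho S_x^*T_y^*=(T_yS_x)\rho (T_yS_x)^*$, and
\begin{equation*}
\sum _{x,y}(T_yS_x)^*(T_yS_x)=\sum _xS_x^*\paren{\sum _yT_y^*T_y}S_x=\sum _xS_x^*S_x=I
\end{equation*}
confirms the Kraus normalization. For (iv), substitution yields $(\iscript\circ\jscript )_{(x,y)}(\rho )=\rmtr (\rho A_x)\rmtr (\alpha _xB_y)\beta _y$, so defining $C_{(x,y)}=\rmtr (\alpha _xB_y)A_x$ and taking states $\beta _y$ exhibits $\iscript\circ\jscript$ as semitrivial; the normalization $\sum _{(x,y)}C_{(x,y)}=\sum _x\rmtr (\alpha _x)A_x=I$ uses $\sum _yB_y=I$ together with $\rmtr (\alpha _x)=1$. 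For the conditional, $(\jscript\mid\iscript )_y(\rho )=\sum _x\rmtr (\rho A_x)\rmtr (\alpha _xB_y)\beta _y=\rmtr\sqbrac{\rho\sum _x\Phi _{\alpha _x}^B(y)A_x}\beta _y$, and the effect in brackets is precisely $\rmpost _{(\alpha ,B)}(\iscripthat\,)_y$ by definition of the post-processing.

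The main obstacle is the converse in part (ii): the substitution $\rho =I$ yields only the identity $A_x\circ B_y=B_y\circ A_x$, and bridging from symmetry of the sequential product to genuine operator commutativity $A_xB_y=B_yA_x$ requires citing an external structural theorem rather than a one-line manipulation. All the other parts are essentially forced once the right normal forms are written down.
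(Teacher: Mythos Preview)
Your argument is correct and matches the paper's essentially line for line in parts (i), (iii) and (iv); you even add the Kraus and observable normalization checks that the paper omits. In part~(ii) the paper is terser than you are: it writes down the two expressions $B_y^{1/2}A_x^{1/2}\rho A_x^{1/2}B_y^{1/2}$ and $(A_x^{1/2}B_yA_x^{1/2})^{1/2}\rho (A_x^{1/2}B_yA_x^{1/2})^{1/2}$ and then simply asserts that they agree for all $\rho$ if and only if $B_y^{1/2}A_x^{1/2}=A_x^{1/2}B_y^{1/2}$, without saying why. Your route---evaluate at the identity to obtain $A_x\circ B_y=B_y\circ A_x$ and then invoke the Gudder--Nagy symmetry criterion \cite{gn01}---makes that inference explicit, and is arguably the missing justification behind the paper's one-line claim. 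One cosmetic point: $I$ is not a state, so either use the maximally mixed state $I/\dim H$ and rescale, or note that equality of the two linear operations on $\sscript(H)$ extends to all of $\lscript(H)$ by linearity before substituting $\rho=I$.
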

\begin{proof}
(i)\enspace For all $x\in\Omega _A$, $y\in\Omega _B$ and $\rho\in\sscript (H)$ we have that
\begin{align*}
\rmtr\sqbrac{\rho (\lscript ^A\circ\lscript ^B)_{(x,y)}^\wedge}&=\rmtr\sqbrac{(\lscript ^A\circ\lscript ^B)_{(x,y)}(\rho )}
   =\rmtr\sqbrac{\lscript _y^B\paren{\lscript _x^A(\rho )}}\\
   &=\rmtr (B_y^{1/2}A_x^{1/2}\rho A_x^{1/2}B_y^{1/2})=\rmtr (\rho A_x^{1/2}B_yA_x^{1/2})\\
   &=\rmtr\sqbrac{\rho (A\circ B)_{(x,y)}}
\end{align*}
It follows that $(\lscript ^A\circ\lscript ^B)^\wedge =A\circ B=(\lscript ^A)^\wedge\circ (\lscript ^B)^\wedge$.
(ii)\enspace As in (i) we have that
\begin{equation}                
\label{eq31}
(\lscript ^A\circ\lscript ^B)_{(x,y)}(\rho )=B_y^{1/2}A_x^{1/2}\rho A_x^{1/2}B_y^{1/2}
\end{equation}
On the other hand
\begin{equation*}
\lscript _{(x,y)}^{A\circ B}(\rho )=(A\circ B)_{(x,y)}^{1/2}\rho (A\circ B)_{(x,y)}^{1/2}
   =(A_x^{1/2}B_yA_x^{1/2})^{1/2}\rho (A_x^{1/2}B_yA_x^{1/2})^{1/2}
\end{equation*}
By \eqref{eq31}, we conclude that $\lscript ^A\circ\lscript ^B$ is a L\"uders instrument if and only if $B_y^{1/2}A_x^{1/2}=A_x^{1/2}B_y^{1/2}$ which is equivalent to $A_xB_y=B_yA_x$ for all $x,y$. In this case $\lscript ^A\circ\lscript ^B=\lscript ^{A\circ B}$.
(iii)\enspace Since
\begin{equation*}
(\iscript\circ\jscript )_{(x,y)}(\rho )=\jscript _y\paren{\iscript _x(\rho )}=T_yS_x\in\rho S_x^*T_y^*=(T_yS_x)\rho (T_yS_x)^*
\end{equation*}
we conclude that $\iscript\circ\jscript$ is a Kraus instrument with operators $T_yS_x$.
(iv)\enspace Since
\begin{align*}
(\iscript\circ\jscript )_{(x,y)}(\rho )&=\rmtr (\rho A_x)\jscript _y(\alpha _x)=\rmtr (\rho A_x)\rmtr (\alpha _xB_y)\beta _y\\
   &=\rmtr\sqbrac{\rho\rmtr (\alpha _xB_y)A_x}\beta _y=\rmtr\sqbrac{\rho C_{(x,y)}}\beta _y
\end{align*}
we conclude that $\iscript\circ\jscript$ is semitrivial with observable $C_{(x,y)}$ and states $\beta _y$. The last statement follows from
\begin{align*}
(\jscript\mid\iscript )_y(\rho )&=\sum _x(\iscript\circ\jscript )_{(x,y)}(\rho )=\rmtr\sqbrac{\rho\sum _x\rmtr (\alpha _xB_y)A_x}\beta _y\\
   &\rmtr\sqbrac{\rho\rmpost _{(\alpha ,B)}(A)_y}\beta _y=\rmtr\sqbrac{\rho\rmpost _{(\alpha ,B)}(\iscripthat\,)_y}\beta _y\qedhere
\end{align*}
\end{proof}

\begin{cor}    
\label{cor35}
If $\iscript$, $\jscript$ are trivial with observables $A$, $B$ and states $\alpha$, $\beta$, respectively, then $\iscript\circ\jscript$ is trivial with observable $C_{(x,y}=\rmtr (\alpha B_y)A_x$ and state $\beta$. Moreover, $(\jscript\mid\iscript )_y(\rho )=\rmtr (\alpha B_y)\beta$ so
$(\jscript\mid\iscript )$ is trivial with observable $\rmtr (\alpha B_y)I$ and state $\beta$.
\end{cor}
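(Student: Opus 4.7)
The plan is to derive Corollary~\ref{cor35} as a direct specialization of Theorem~\ref{thm34}(iv), since a trivial instrument is precisely a semitrivial instrument whose output state family is constant. In detail, $\iscript _x(\rho )=\rmtr (\rho A_x)\alpha$ is the semitrivial instrument with $\alpha _x=\alpha$ for every $x$, and analogously for $\jscript$ with $\beta _y=\beta$ for every $y$. So Theorem~\ref{thm34}(iv) applies and its conclusions only need to be simplified to the present setting.

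For the sequential product, Theorem~\ref{thm34}(iv) says that $\iscript\circ\jscript$ is semitrivial with observable
\begin{equation*}
C_{(x,y)}=\rmtr (\alpha _xB_y)A_x=\rmtr (\alpha B_y)A_x
\end{equation*}
and output states $\beta _y=\beta$. Since the output state is the common value $\beta$ for every $(x,y)$, this semitrivial instrument is in fact trivial, which yields the first assertion.

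For the conditional instrument, Theorem~\ref{thm34}(iv) gives that $(\jscript\mid\iscript )$ is semitrivial with observable $\rmpost _{(\alpha ,B)}(\iscripthat\,)$ and output states $\beta _y=\beta$. The only computation required is to note that when $\alpha _x=\alpha$ is independent of $x$, the post-processing collapses using $\iscripthat =A$ and $\sum _xA_x=I$:
\begin{equation*}
\rmpost _{(\alpha ,B)}(A)_y=\sum _{x\in\Omega _A}\rmtr (\alpha B_y)A_x=\rmtr (\alpha B_y)I.
\end{equation*}
Combining this with the constancy of $\beta$ gives $(\jscript\mid\iscript )_y(\rho )=\rmtr (\alpha B_y)\beta =\rmtr\sqbrac{\rho\,\rmtr (\alpha B_y)I}\beta$, so $(\jscript\mid\iscript )$ is trivial with observable $y\mapsto\rmtr (\alpha B_y)I$ and state $\beta$. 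One should also check that $y\mapsto\rmtr (\alpha B_y)I$ is genuinely an observable, which follows from $\sum _y\rmtr (\alpha B_y)I=\rmtr\paren{\alpha\sum _yB_y}I=\rmtr (\alpha )I=I$. No substantive obstacle appears: the content of the corollary is the recognition that triviality is semitriviality with a constant output state, together with the observation that constancy of $\alpha _x$ causes the post-processing to collapse to a scalar multiple of $I$.
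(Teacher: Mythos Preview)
Your proposal is correct and matches the paper's approach: the corollary is stated without proof, implicitly as an immediate specialization of Theorem~\ref{thm34}(iv), which is exactly what you do by taking $\alpha _x=\alpha$ and $\beta _y=\beta$ and simplifying. Your explicit verification that $\rmpost _{(\alpha ,B)}(A)_y=\rmtr (\alpha B_y)I$ and that this defines a genuine observable is a welcome bit of care that the paper leaves tacit.
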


\begin{exam}  
This example illustrates that $(\iscript\circ\jscript )^\wedge\ne\iscripthat\circ\jscripthat$ except for L\"uders instruments. If $\iscript$ and $\jscript$ are Kraus instruments with operators $S_x$, $T_y$, respectively, we have seen that $(\iscript\circ\jscript )_{(x,y)}^\wedge =S_x^*T_y^*T_yS_x$. However,
\begin{equation*}
(\iscripthat\circ\jscripthat\,)_{(x,y)}=\iscripthat _x\circ\jscripthat _y=\iscripthat _x^{\,1/2}\jscripthat _y\iscripthat _x^{\,1/2}
   =(S_x^*S_x)^{1/2}T_y^*T_y(S_x^*S_x)^{1/2}
\end{equation*}
Hence, $(\iscript\circ\jscript )^\wedge\ne\iscripthat\circ\jscripthat$, in general. If $\iscript$, $\jscript$ are trivial instruments with operators $A,B$ and states $\alpha$, $\beta$, respectively, then
\begin{equation*}
(\iscripthat\circ\jscripthat\,)_{(x,y)}=\iscripthat _x\circ\jscripthat _y=A_x\circ B_y=A_x^{1/2}B_yA_x^{1/2}
\end{equation*}
However, we have seen that
\begin{equation*}
(\iscript\circ\jscript )_{(x,y)}^\wedge =\rmtr (\alpha B_y)A_x
\end{equation*}
Hence, $(\iscript\circ\jscript )^\wedge\ne\iscripthat\circ\jscripthat$, in general.\hfill\qedsymbol
\end{exam}

\begin{exam}  
We first show that $f(\lscript ^A)$ is not a L\"uders instrument and $f(\lscript ^A)\ne\lscript ^{f(A)}$ in general. To show this, we have that
\begin{equation}                
\label{eq32}
f(\lscript ^A)_y(\rho )=\lscript _{f^{-1}(y)}^A(\rho )=\sum _x\brac{A_x^{1/2}\rho A_x^{1/2}\colon f(x)=y}
\end{equation}
which is not a L\"uders instrument, in general. However, $\lscript ^{f(A)}$ is a L\"uders instrument so $f(\lscript ^A)\ne\lscript ^{f(A)}$. To be explicit we obtain
\begin{align*}
\lscript _y^{f(A)}(\rho )&=f(A)_y^{1/2}\rho f(A)_y^{1/2}=A_{f^{-1}(y)}^{1/2}\rho A_{f^{-1}(y)}^{1/2}\\
   &=\paren{\sum _x\brac{A_x\colon f(x)=y}}^{1/2}\rho\paren{\sum _x\brac{A_x\colon f(x)=y}}^{1/2}
\end{align*}
which is different than $f(\lscript ^A)$ in \eqref{eq32}. If $\iscript _x(\rho )=S_x\rho S_x^*$ is a Kraus instrument, then $f(\iscript )$ need not be a Kraus instrument. Indeed,
\begin{equation*}
f(\iscript )_y(\rho )=\iscript _{f^{-1}(y)}(\rho )=\sum _x\brac{\iscript _x(\rho )\colon f(x)=y}=\sum _x\brac{S_x\rho S_x^*\colon f(x)=y}
\end{equation*}
which is not a Kraus instrument, in general. We leave it to the reader to show that if $\iscript$ is semitrivial, then $f(\iscript )$ need not be semitrivial. However, if $\iscript (\rho )=\rmtr (\rho A_x)\alpha$ is trivial, then $f(\iscript )$ is trivial with observable $f(A)$ and state $\alpha$. Indeed,
\begin{align*}
f(\iscript )_y(\rho )&=\sum _x\brac{\iscript _x(\rho )\colon f(x)=y}=\sum _x\brac{\rmtr (\rho A_x)\alpha\colon f(x)=y}\\
   &=\rmtr\sqbrac{\rho\sum _x\brac{A_x\colon f(x)=y}}\alpha =\rmtr\sqbrac{\rho f(A)_y}\alpha\hskip 7pc\qedsymbol
\end{align*}
\end{exam}

\end{document}